  \providecommand\BibTeX{{%
    \normalfont B\kern-0.5em{\scshape i\kern-0.25em b}\kern-0.8em\TeX}}}
      \theoremstyle{plain}
      \newtheorem{assumption}{Assumption}
      \newtheorem{invariant}{Invariant}
\newcommand{\reals}{\mathbb{R}}
\newcommand{\simneq}{\sim}
\renewcommand{\bar}{\overline}
\newcommand\mybold[1]{%
\expandafter\newcommand\csname #1\endcsname{\ensuremath{\boldsymbol{#1}}}}
\newcommand\myboldr[1]{%
\expandafter\renewcommand\csname #1\endcsname{\ensuremath{\boldsymbol{#1}}}}
\newcommand{\argmax}{\mathop{\mathrm{argmax}}}
\newcommand{\Comments}{0}
\newcommand{\mynote}[2]{\ifnum\Comments>0\textcolor{#1}{#2}\fi}
\newcommand{\mytodo}[2]{\ifnum\Comments>0\todo[linecolor=#1!80!black,backgroundcolor=#1,bordercolor=#1!80!black]{#2}\fi}
\newcommand{\btw}[1]{\ifnum\Comments=1\mytodo{gray!20!white}{#1}\fi}
\newcommand{\later}[1]{\ifnum\Comments=1\mytodo{red!50!white}{LATER: #1}\fi}
\begin{document}

\title{Graphical Economies with Resale}

\author{Gabriel Andrade}
\email{gabriel.andrade@colorado.edu}
\affiliation{%
  \institution{University of Colorado Boulder}
  \city{Boulder}
  \state{Colorado}
}
\author{Rafael Frongillo}
\email{raf@colorado.edu}
\affiliation{%
  \institution{University of Colorado Boulder}
  \city{Boulder}
  \state{Colorado}
}
\author{Elliot Gorokhovsky}
\affiliation{%
  \institution{Arizona State University}
  \city{Tempe}
  \state{Arizona}
}
\author{Sharadha Srinivasan}
\affiliation{%
  \institution{University of Colorado Boulder}
  \city{Boulder}
  \state{Colorado}
}

\begin{abstract}
  Kakade, Kearns, and Ortiz (KKO) introduce a graph-theoretic generalization of the classic Arrow--Debreu (AD) exchange economy.
  Despite its appeal as a networked version of AD, we argue that the KKO model is too local, in the sense that goods cannot travel more than one hop through the network.
  We introduce an alternative model in which agents may purchase goods on credit in order to resell them.
  In contrast to KKO, our model allows for long-range trade, and yields equilibria in more settings than KKO, including sparse endowments.
  Our model smoothly interpolates between the KKO and AD equilibrium concepts: we recover KKO when the resale capacity is zero, and recover AD when it is sufficiently large.
  We give general equilibrium existence results, and an auction-based algorithm to compute approximate equilibria when agent utilities satisfy the weak gross-substitutes property.
\end{abstract}

\begin{titlepage}

\maketitle

\end{titlepage}

\section{Introduction}
\label{sec:intro}

Market exchange equilibria have an illustrious history dating back to Walras in 1874~\cite{walras1874elements}, culminating in a general proof of existence for Arrow--Debreu~(AD) equilibria~\cite{arrow1954existence}.
A key assumption of the Walrasian framework---that markets are \emph{centralized}---has been under scrutiny the past $20$ years, spurring a rapidly growing literature focused on the economics of networks~(cf.~\cite{OxfordHandbookoftheEconomicsofNetworks}, \S~\ref{sec:lit_networks}).
In these decentralized models, agents are typically nodes in a graph and trade is restricted to edges.
A particularly inspiring example, and the prime motivation for the present work, is the graphical exchange economy introduced by Kakade, Kearns, and Ortiz (KKO)~\cite{kakade2004graphical}.
The KKO model allows for \emph{local} markets, where intuitively each agent can set its own prices, and can purchase goods from neighoring agents at their prices.

While the KKO model does capture many intuitive phenomena arising from economies with graphical structure, the model falls short in several simple scenarios.
In particular, KKO is perhaps ``too local'', as the model prohibits goods to move more than one hop in the network.
As such, there are simple networks where one would intuitively expect a chain of trade, yet KKO offers either no equilibrium or one which fragments the network.
See \S\ref{sec:resale} for two such examples.

In this paper, we introduce an extension of the KKO model which allows agents to purchase goods to resell, thereby facilitating long-range trade throughout the network.
As we show, our model provides an intuitive resolution to scenarios described above involving a chain of trade, giving the intermediate nodes a share of the overall surplus.
Our model also smoothly interpolates between the KKO and AD equilibrium concepts, by throttling the amount that agents are allowed to resell: when resale is eliminated, goods can only move one hop, and we recover KKO, and conversely when resale is sufficiently large, all local prices must coincide, and we recover AD.

In addition to the intuitive appeal of our model, we give standard existence results, showing that equilibria exist in essentially any setting where AD equilibria exist (\S\ref{sec:existence}).
We also give an auction-based algorithm to compute approximate equilibria, by modifying previous combinatorial auction-based algorithms to account for resale (\S\ref{sec:algo}).
Despite the similarity to the algorithms in~\cite{kapoor2005auction,garg2006auction}, we require entirely new techniques to bound the runtime.
Our algorithmic results are perhaps surprising, as our model is closely related to AD economies with production, which in general have bleak computational results~\cite{garg2017settling,garg2015markets,garg2013computability}.
We conclude in~\S\ref{sec:disc} with extensions, discussion, and future directions.

\subsection{Related Work}
Understanding the behaviour of markets (economies) is at the heart of many research streams in theoretical economics.
Work pioneered by the $19^{\text{th}}$ century economist L\'{e}on Walras~\cite{walras1874elements} has been at the epicenter since pioneering work by~\citet{arrow1954existence}, which establishes the existence of equilibria in a very general model of the economy that is now known as the Arrow--Debreu (AD) market model.
In the theoretical computer science community, computing equilibria in the AD model has garnered much attention.
In the purely exchange setting of the AD model multiple algorithmic innovations have been made to compute equilibria, e.g.~\cite{devanur2004spending,vazirani2010spending,devanur2008market,garg2006auction,codenotti2005market,kakade2004graphical,garg2019auction}.
The work introduced in this paper falls outside the scope of these algorithms since resale in graphical economies behaves like a special version of production in the AD model.
For AD markets with production, \citet{jain2006equilibria} and \citet{kapoor2005auction} give polynomial-time algorithms for finding competitive equilibria for certain types of production and utility functions.
However, resale in graphical economies falls outside the scope of production applicable to the convex program in~\cite{jain2006equilibria}, and is more complicated than the production in~\cite{kapoor2005auction}.
In fact, certain forms of resale reduce to a special case of production where even approximating competitive equilibria is known to be FIXP-hard in general~\cite{garg2017settling,garg2014leontief}.

We introduce a variant of the auction based algorithms by~\citet{garg2006auction} and~\citet{kapoor2005auction} that is able to approximate competitive equilibria in graphical economies with resale, though our proofs differ substantially.
By considering specific graph structures, the algorithm being introduced extends both~\cite{garg2006auction} and~\cite{kapoor2005auction} to a more general class of utility functions and, in the case of~\cite{garg2006auction}, to a more general class of production functions.
These auction-based algorithms have a storied history in general resource allocation and network optimization~\cite{bertsekas1992auction,lin2013auction,zavlanos2008distributed,avasarala2006approximate,curescu2008bidding}, thus generalizing these algorithms to a network market model with long range trade provides insights reaching farther than classic economics.
Finally, a form of resale discussed in this paper, called credit bound resale, can be thought of as a natural extension of the spending constraint utilities introduced in~\cite{devanur2004spending,vazirani2010spending} to a production-like setting.

Along with the aforementioned connections to traditional economic theory and algorithmic economics, the work presented here has deep ties to the study of economic networks more broadly.
In~\S\ref{sec:lit_networks} we discuss these connections in detail.

\section{Setting and Background}
\label{sec:s_and_b}
We consider economies consisting of a set $[\ell]:=\{1,\ldots,\ell\}$ of divisible goods and a set $[m]:=\{1,\ldots,m\}$ of agents embedded as nodes in some graph $G = ([m],E)$, whose edges $E$ describe who may trade with whom.
For ease of exposition we will assume that $G$ is undirected throughout this paper; all results can be easily extended to directed graphs.
In addition, to simplify notation, in lieu of $E$ we will use the reflexive symmetric binary relation $\simeq$ on $[m]$, so that for any two agents $i,j \in [m]$, the presence of an edge between them is denoted by $i \simeq j$.
When it is important to specify, we will also use the notation $i \sim j$ to mean $(i \simeq j \And i \neq j)$.

In the economy, each agent $i \in [m]$ has an endowment of goods $\e^i \in \reals^\ell_+$ and a utility function $u_i : \reals^\ell_+ \to \reals_+$.
The endowment vector $\e^i$  describes the bundle of goods that agent $i$ enters the market with; agent $i$ is endowed an amount $e^i_k$ of good $k \in [\ell]$.
The utility function $u_i$ encodes agent $i$'s preferences over bundles of goods.
As introduced by~\citet{kakade2004graphical}, a graphical economy can be formally defined as follows.

\begin{definition}
  A \emph{graphical economy} is an undirected graph $G$ over agents $[m]$ with neighbor relation $\simeq$, utilities $\{u_i : \reals^\ell_+ \to \reals_+\}_{i\in[m]}$, and endowments $\{\e^i \in \reals^\ell_+\}_{i\in[m]}$, where $\ell$ is an integer denoting the number of goods being traded.
\end{definition}

To discuss equilibria in a graphical economy, Kakade, et al.~\cite{kakade2004graphical} introduce local price vectors $\p^i \in \reals^\ell_+$ for each agent $i \in [m]$, so that $p^i_k$ is the price at which agent $i$ sells one unit of good $k \in [\ell]$.
The consumption plans\footnote{In the traditional AD economy setting, an agent's consumption plan is often referred to as their ``allocation''. 
However, as defined in~\S\ref{sec:resale}, we will consider agents who purchase goods for both consumption and reselling.
For this reason we opt for the term consumption plan, as it more naturally differentiates between the goods purchased for consumption vs.~reselling.} are given by vectors $\x^{ij} \in \reals^\ell_+$ describing the bundle of goods agent $i \in [m]$ purchases from agent $j \in [m]$ for consumption, so that the set $\x^{i} = \{\x^{ij} : j \in [m]\}$ thus describes all of the consumption for agent $i$.
To enforce the condition that trade must traverse edges, we set $\x^{ij} = \0$ for $j \not\simeq i$.

Throughout the paper we will use $\e = \{\e^i : i \in [m]\}$ to denote the full set of endowments in the economy, $\p = \{\p^i : i \in [m]\}$ to denote the full set of prices in the economy,  $\x = \{\x^i : i \in [m]\}$ to denote the full set of consumption plans in the economy, etc.
In addition, we will treat each of these sets as vectors, matrices, and tensors that are indexed in a way that is consistent with their definitions above; for example, agent $i$ buys an amount $x^{ij}_k$ of good $k$ from agent $j$ for consumption.

\subsection{The Arrow--Debreu (AD) Exchange Model}
\label{subsec:AD}
The Arrow--Debreu (AD) exchange economy~\cite{arrow1954existence}, also known as the Walrasian model, is extremely well studied due to its central role in general equilibrium theory.
The graphical economies studied here are generalizations of AD which retain AD as a special case~\cite{kakade2004graphical}.
In the language introduced above, consider a graphical economy with $m$ agents and $\ell$ goods that is embedded in a complete graph, i.e.~a graphical economy where every pair of agents is able to trade with one another.

\begin{definition}
  \label{def:ad}
  An \emph{AD equilibrium} is a pair $(\p,\x)$ of a set of price vectors $\p$ and set of consumption plans $\x$ such that if the underlying graph is complete, we have $\p^i=\p^j$ for all $i,j\in[m]$, and the following conditions are satisfied:
\begin{enumerate}
\item[1.] \emph{Market Clearing.}
  \begin{equation}\label{eq:AD-clear}
    \sum_{i,j\in[m]} \x^{ij} = \sum_{i\in[m]} \e^i~.
  \end{equation}

\item[2.] \emph{Individual rationality (IR).}  \; For all agents $i\in[m]$, setting $\hat \x^i = \x^i$ maximizes their utility $u_i\left( \sum_{j \simeq i} \hat \x^{ij} \right)$ over all $\hat \x^i \in \reals^\ell_+$ satisfying
  \begin{equation}
    \sum_{j \simeq i} \p^{j} \cdot \hat \x^{ij} \leq \p^i \cdot \e^i~.
    \label{eq:AD-budget}
  \end{equation}
\end{enumerate}
\end{definition}

Notice that since the underlying trade graph is complete, the condition that $\p^i=\p^j$ for all $i,j\in[m]$ is without loss of generality: by allowing trade between every pair of agents, at equilibrium the local aspects of a graphical economy essentially become obsolete. 
To see this, note that agents can consume goods from any other agent, and thus individual rationality dictates that agents should only consume utility maximizing goods at the cheapest price in the economy.
To enforce market clearing the price on a good must therefore be identical throughout the economy, meaning that prices can simply be described by a single vector.
Similarly, since local prices are all the same, all that matters about the consumption plan $\x$ are the sums $\d^i(\x) = \sum_{j \simeq i} \x^{ji}$, often called the demand vectors, specifying what each agent consumes but not from whom.
After these two simplifications, we arrive at the usual definition of an AD equilibrium.

\subsection{The Kakade, Kearns, Ortiz (KKO) Equilibrium}
\label{subsec:KKO}
By moving from the complete graph to general graphs, and imposing a local clearing condition instead of a global one, we arrive at the notion of equilibria for graphical economies introduced by Kakade, Kearns, and Ortiz (KKO)~\cite{kakade2004graphical}.
Unlike the AD equilibrium, the KKO model allows asymmetries in trade opportunities to arise from the underlying graph, yielding local price vectors that are generally distinct among agents.

\begin{definition}
  \label{def:kko}
  A \emph{KKO equilibrium} of a graphical economy is a pair $(\p,\x)$ of prices $\p \in \reals_+^{m\times\ell}$ and consumption plans $\x \in \reals_+^{m\times m\times\ell}$, such that the following conditions are satisfied for all $i \in [m]$:
\begin{enumerate}
\item[1.] \emph{Local Clearing.}
  \begin{equation}\label{eq:KKO-clear}
    \sum_{j \simeq i} \x^{ji} = \e^i~.
  \end{equation}

\item[2.] \emph{Individual rationality (IR).}  \; Setting $\hat \x^{i} = \x^{i}$ maximizes the utility $u_i\left( \sum_{j \simeq i} \hat \x^{ij} \right)$ over all $\hat \x^{i} \in \reals^{m\times\ell}_+$ satisfying
  \begin{equation}
    \sum_{j \simeq i} \p^j \cdot \hat \x^{ij} \leq \p^i \cdot \e^i~.
    \label{eq:KKO-budget}
  \end{equation}
\end{enumerate}
\end{definition}

As above, one can see eq.~\eqref{eq:KKO-clear} as a clearing constraint by defining the demand vector from agent $i$ by $\d^i(\x) = \sum_{j \simeq i} \x^{ji}$. 
When $G$ is the complete graph, we recover the AD equilibrium.

Graphical economies such as KKO are both technically and conceptually appealing.
As is often the case in computer science, designing algorithms to exploit the underlying network topology can yield more efficient algorithms; see~\cite{kakade2004graphical,kakade2004economic,nisan2007AGTbook}, as well as~\S\ref{sec:algo}.
Conceptually, as one of the most well-studied and general mathematical models of trade, placing AD on a network provides a natural way of understanding the effects of network topology in actual economies and, more generally, resource routing problems.
Studying the relationship between network structure and \emph{local} equilibrium outcomes yields better understanding of the phenomena observed in data~\cite{kakade2004economic}.

\citet{kakade2004graphical} show the existence of graphical equilibria under very general conditions, and give an algorithm to compute equilibria in polynomial time.
Their algorithm follows from a novel algorithm to compute AD equilibria, together with a reduction to AD from the graphical setting by using $m\ell$ ``tagged'' goods which are marked according to who sold them.
The model we present below will not allow such a reduction.

\section{Allowing Resale}
\label{sec:resale}
Consider the following natural 3-node example: two agents with complementary endowments and preferences, each connected only to a single ``broker'' agent with no endowment.
The utilities in this example are linear; here and throughout the paper when we work with linear utilities, we will specify the utility $u_i$ by its coefficients $\u^i \in \reals^\ell_+$, so that $u_i(\v) = \u^i \cdot \v$.

\tikzstyle{agent}=[draw,circle,inner sep=1pt,minimum size=15pt]
\tikzstyle{txt}=[text width=1.6cm,align=left,anchor=north]
\begin{center}
\begin{tikzpicture}[scale=3,-]
  \foreach \i in {1,2,3}
  {
    \node[agent] (\i) at (\i,0) {\i};
    \coordinate (\i-c) at (\i,-.1);
  }
  \node[txt] at (1-c) {$\e^1 = (1,0)$\\$\u^1 = (0,1)$\\[4pt]$\p^1 = \;\;\;\;?$};
  \node[txt] at (2-c) {$\e^2 = (0,0)$\\$\u^2 = (1,1)$\\[4pt]$\p^2 = \;\;\;\;?$};
  \node[txt] at (3-c) {$\e^3 = (0,1)$\\$\u^3 = (1,0)$\\[4pt]$\p^3 = \;\;\;\;?$};
  \path (1) edge (2) (2) edge (3);
\end{tikzpicture}
\end{center}

Intuitively, one might expect that agent 2 would extract rent from the other agents, as without a broker, agents 1 and 3 cannot trade and would have no utility.
However, there is no KKO equilibrium, as technically agents 1 and 3 cannot trade directly even in the presence of agent~2.
To see this, note that for the market to clear in eq.~\eqref{eq:KKO-clear}, no agent can purchase anything from agent 2, as the demand must match the supply $\e^2 = (0,0)$.
Furthermore, agents 1 or 3 cannot have any nonzero budget after selling their endowment, since otherwise they would rationally spend it on goods which they and their neighbors do not have in their endowments.
We conclude that the prices of the nonzero endowments are zero: $p^1_1 = 0$ and $p^3_2 = 0$.
But now that these prices are zero, agent 2 only satisfies individual rationality by buying an infinite amount of these goods, making clearing impossible.
We conclude that no equilibrium exists under the KKO model.

A standard way to address this non-existence is to impose a small minimum endowment, i.e., to replace each zero in the endowment vectors with $\epsilon$.
Let us call an $\epsilon$-KKO equilibrium one which arises from a KKO equilibrum after imposing this minimum; such an equilibrium always exists by~\cite{kakade2004graphical}.
In the $\epsilon$-KKO equilibrium for this example, since agent 2 is indifferent between both goods, agent~2 extracts most of the utility as expected.
In~\S\ref{subsec:broker} we show that our formalism using resale leads to an equivalent outcome.

Unfortunately, $\epsilon$-KKO equilibria do not always match intuition.
Consider a $\epsilon$-KKO equilibrium for a modification of the above example, where we set $\u^2 = (0,1)$.
Since agents 1 and 2 have no utility for the first good and one of them must consume this good for the market to clear in eq.~\eqref{eq:KKO-clear}, agent~1 must have zero price for that good: $p^1_1 = 0$.
The endowment of agent 1 is therefore distributed among agents 1 and 2 for no profit and no utility, meaning an entire unit of utility is wasted in any $\epsilon$-KKO equilibrium.
By contrast, under our model agent 2 takes advantage of resale by facilitating the trade of the first good to agent 3.
See Appendix~\ref{append:asymmetric_broker} for a full analysis.

\subsection{Resale Equilibrium}
\label{subsec:model}
Motivated by the above examples, we introduce a new equilibrium concept for graphical economies which allows agents to facilitate trade between other parties without consuming any of the goods so traded; in other words, agents are allowed to \emph{resell} goods.
By introducing resale we are able to extend KKO equilibria to better capture the behaviour one expects in graphical economies at equilibrium.
Though resale equilibria are defined as a one-shot process like the AD and KKO settings, it is intuitive to think of a resale equilibrium as proceeding in two phases: a resale phase and a purchase phase.
In the resale phase, agents purchase goods from each other but immediately sell them, intuitively to arbitrage prices which differ within their neighborhood.
At the end of the resale phase, the market need not clear.
Next, in the purchase phase, agents sell their endowments, and then use this money, together with the profits from the first phase, to purchase goods to optimize their utility.
Only after both phases do we require the market to clear at each node, in the sense that every agent holds onto a nonnegative amount of each good, and no goods are created or destroyed in the economy.

To capture resale, we track purchases for resale separately from purchases for consumption.
The vector $\y^{ij} \in \reals^\ell_+$ denotes the bundle of goods agent $i \in [m]$ purchased from agent $j \in [m]$ for resale, and as before we define $\y^{ij}=\0$ for $j \not\simeq i$.
As with $\x^i,\x$, we define $\y^{i} = \{\y^{ij} : j \in [m]\}$ and $\y = \{\y^i : i \in [m]\}$ to be the resale plans for an agent and for the whole economy respectively.
Finally, associated with each agent $i \in [m]$ is a resale bound $b_i \in \reals_+$ representing an exogenous limit on resale for agent $i$.
The vector $\b \in \reals^m_+$ represents the set of every agent's resale bound.
The resale limits in $\b$ have a natural interpretation as credit or capacity constraints, as we describe below.

We define equilibria in graphical economies with resale in terms of \emph{demand systems}.
This formalism provides an intuitive means of reasoning about graphical economies with resale in their full generality and will be particularly important for the algorithm we introduce in~\S\ref{sec:algo}.
Our use of demand systems is inspired by the recent work of~\citet{garg2019auction}, though our algorithm differs from theirs in significant ways beyond just applying to a graphical setting with resale. 
Informally, demand systems encode optimal actions for agents in the economy at a given set of prices and budget.
For example, in both AD and KKO equilibria, a demand system will return the set of consumption plans maximizing an agent's utility while satisfying eqs.~\ref{eq:AD-budget} and~\ref{eq:KKO-budget} above.

\begin{definition}
    A \emph{demand system} is a function $D:\reals_+^{m \times \ell}\times\reals_+ \rightarrow 2^{\reals_+^{m \times \ell}}$.
    A demand system is said to be \emph{normalized} if $D(\p,0) = \{\0\}$ for all $\p\in\reals_+^{m \times \ell}$, and said to be \emph{scale invariant} if $D(\p,\beta) = D(\alpha \p, \alpha \beta)$ for all $\alpha > 0$, $\beta \geq 0$, $\p\in\reals_+^{m \times \ell}$.
\end{definition}
In our model each agent $i$ has two demand systems: a consumption demand system $C_i$ and a resale demand system $R_i$.
For consumption, $C_i(\p,\beta_i)$ denotes the set of plans $\x^i$ that maximize agent $i$'s utility $u_i(\sum_{j \simeq i} \x^{ij})$ at prices $\p$ within the budget $\beta_i = \p^i \cdot \e^i + \sum_{j \simeq i} (\p^i - \p^j) \cdot \y^{ij}$.
As defined above, we have $\x^{ij} = \0$ for $j \not\simeq i$ in every consumption plan $\x^i \in C_i(\p,\beta_i)$.
When an agent has linear utilities $\u^i$, $C_i(\p,\beta_i)$ will include all consumption plans $\x^i$ with fractional assignments of goods maximizing bang-per-buck $u^i_k / p^j_k$ with total price $\beta_i$.
\footnote{When the optimal consumption plans are unique for each set of prices and budget, these systems coincide with demand functions, which are well studied in exchange economies for certain classes of utility functions.
To see the importance of considering demand systems rather than demand functions, simply note that linear utility functions often have non-unique demand sets for a given set of prices and budgets.}

For resale, $R_i(\p,b_i)$ denotes the set of resale plans $\y^i$ that maximize resale profits $\sum_{j \simeq i} (\p^i - \p^j) \cdot \y^{ij}$ over a set of feasible resale plans $\mathcal{Y}_i(\p,b_i)$.
We enforce $\y^{ij} = \0$ for $j \not\simeq i$ in every $\y^i \in \mathcal{Y}_i(\p,b_i)$.
While our results apply to quite general resale demand systems,%
\footnote{In fact, for the analysis in \S~\ref{sec:algo}, we do not even need resale demand systems to maximize resale profits.}
a natural and motivating example is \emph{credit bound resale}, where $b_i$ represents the total budget or ``line of credit'' with which agent $i$ can purchase goods in order to resell.
Formally, the credit bound resale demand system sets $\mathcal{Y}_i(\p,b_i) = \{\y^i : \sum_{j \simeq i} \p^j \cdot \y^{ij} \leq b_i\}$, so that
$R_i(\p,b_i)$ is the set of fractional assignments of goods of total cost $b_i$ maximizing profit-per-credit $p^i_k/p^j_k - 1$.
Another natural form of resale is \emph{commodity bound resale}, which maximizes resale profits subject to $\sum_{j \simeq i} \|\y^{ij}\|_1 \leq b_i$.
Despite its intuitive appeal, proving existence for commodity bound resale requires strengthening our assumptions, as we discuss in~\S\ref{sec:resale-with-capacity}.
Our examples throughout the paper use credit bound resale where every agent has linear utilities.

\begin{definition}
  \label{def:resale_equilibrium}
  Let $C_i$ and $R_i$ be the consumption and normalized resale demand systems of each agent $i \in [m]$.
  For any $\b \in \reals^m_+$, a \emph{$\b$-resale equilibrium} of a graphical economy is a triple $(\p, \x, \y)$ of prices $\p$, consumption plans $\x$, and resale plans $\y$, such that for all $i \in [m]$ the following conditions are satisfied:
  \begin{enumerate}
  \item[1.] \emph{Local Clearing}. \;
    \(\displaystyle
      \label{eq:GER-clear}
      \sum_{j\simeq i} \x^{ji} + \sum_{j\simeq i} \y^{ji} = \e^i + \sum_{j \simeq i} \y^{ij}~.
    \)
  \item[2.] \emph{Optimal arbitrage}. \; $\y^{i} \in R_i(\p,b_i)$.
  \item[3.] \emph{Individual rationality (IR)}. \; $\x^{i} \in C_i(\p, \; \p^i \cdot \e^i + \sum_{j \simeq i} (\p^i - \p^j) \cdot \y^{ij})$.
  \end{enumerate}
\end{definition}
When using resale demand systems having constraints such as credit bounds, the prices can determine the feasible set of resale plans.
Due to this dependence, equilibrium prices do not necessarily scale in the sense traditionally seen in AD economies.
For example, in credit bound resale, equilibrium plans are preserved when scaling both equilibrium prices and credit bounds $b_i$.
The equilibrium prices still scale in the traditional sense for commodity bound resale, as there the feasible resale plans do not depend on prices.

For credit bound resale and similar resale demand systems, we recover the KKO equilibrium concept for $\b=\0$ and the AD equilibrium for sufficiently large $\b$.
To see this, for every $i \in [m]$, consider a normalized resale system $R_i$ that returns sets $\y^i$ that are increasing in $b_i$ when $i$ can profit.
To recover KKO, we simply set $\b=\0$ to remove resale: $\b=\0$ implies $\y^{i}=\0$ for all $i$ as $R_i$ is normalized, and thus consumption budgets are $\p^i \cdot \e^i$ which is equivalent to eq.~\eqref{eq:KKO-budget} and market clearing simply reduces to eq.~\eqref{eq:KKO-clear}.
Conversely, let $b_i$ be sufficiently large for each agent.
As resale is essentially unrestricted, then any difference in prices will result in large reselling.
In particular, for all $i$, if $b_i$ is large enough\footnote{For credit bound resale, it suffices to set $b_i \geq \sum_{j\in[m],k\in[\ell]} p^j_k e^j_k~,$ the total cost of buying all goods in the economy.} for there to exist $\y^i \in R_i(\p,b_i)$ at equilibrium prices $\p$ such that $\y^i$ includes every good in the economy, then the market will not clear if any prices differ.
Thus, for sufficiently large $\b$, we will have $\p^i = \p^j$ for all $i,j$, forcing zero profits from resale.
As agents are therefore indifferent among all valid resale plans, goods can move freely throughout the graph, and we can set $\y$ to achieve any consumption plan $\x$ satisfying global market clearing, eq.~\eqref{eq:AD-clear}.

\subsection{Revisiting the Broker Example}
\label{subsec:broker}
We can now see how the addition of resale changes the outcome of the broker example from above.
For simplicity let $R_i$ be a credit bound resale demand system for every agent $i \in [3]$.
We will argue that, for any $0 < b \leq 2$, the following prices at $\alpha=\sqrt{b/2}$ lead to a $\b$-resale equilibrium where $\b=(b,b,b)$.
Dotted lines depict movement of the goods.

\begin{center}
\begin{tikzpicture}[scale=3,-]
  \foreach \i in {1,2,3}
  {
    \node[agent] (\i) at (\i,0) {\i};
    \coordinate (\i-c) at (\i,-.3);
  }
  \node[txt] at (1-c) {$\e^1 = (1,0)$\\$\u^1 = (0,1)$\\[4pt]$\p^1 = (\alpha,1)$};
  \node[txt] at (2-c) {$\e^2 = (0,0)$\\$\u^2 = (1,1)$\\[4pt]$\p^2 = (1,1)$};
  \node[txt] at (3-c) {$\e^3 = (0,1)$\\$\u^3 = (1,0)$\\[4pt]$\p^3 = (1,\alpha)$};
  \path (1) edge (2) (2) edge (3);
  \path[dashed,-latex]
  (1) edge[bend left] node[above] {$(1,0)$} (2)
  (2) edge[bend left] node[above] {$(\alpha,0)$} (3)
  (3) edge[bend left] node[below] {$(0,1)$} (2)
  (2) edge[bend left] node[below] {$(0,\alpha)$} (1);
\end{tikzpicture}
\end{center}

Let us verify the equilibrium.
Agents 1 and 3 have nothing to gain from resale, so abstain from the first phase, whereas agent 2 can resell a total of $b/\alpha$ units.
Agent 2 thus purchases $(b/2\alpha,0)$ and $(0,b/2\alpha)$ from agents 1 and 3, respectively, to then resell for an optimal profit of $(1-\alpha)b/\alpha$.
In the second phase, agents 1 and 3 sell their endowments for a revenue of $\alpha$ and purchase $(0,\alpha)$ and $(\alpha,0)$ from agent 2, respectively, thus optimizing their utility.
Agent 2 uses the profit from the first phase to purchase $(b(1-\alpha)/2\alpha^2,0)$ and $(0,b(1-\alpha)/2\alpha^2)$ from agents 1 and 3, respectively, at a price $\alpha$.
Optimal arbitrage and individual rationality are clear; it remains to check the clearing constraint.
Setting $\alpha = \sqrt{b/2}$, we now have that the goods consumed by agent 2 are $(1-\alpha,1-\alpha)$, and the goods resold are indeed consumed by agents 1 and 3.%
\footnote{Commodity bound resale offers a similar resolution to this broker example.
  The primary difference is that agent 2's resale plans are not influenced by prices, and thus agent 2 will always resell $\alpha{=}b/2$ from each of the other agents.}

The resolution of the broker example is intuitive: adding a small amount of resale capacity $b$ allows agent 2 to extract nearly all of the rent for the service of facilitating trade between agents 1 and 3.
Specifically, the final utilities are $\sqrt{b/2}$, $2(1-\sqrt{b/2})$, and $\sqrt{b/2}$, respectively.
The larger $b$ becomes, the less rent agent 2 can extract, for the simple reason that the prices for the other agents must increase for the market to clear.
When $b\geq 2$, all prices become equal and the market is effectively a classic AD exchange economy as predicted in~\S\ref{subsec:model}.

Finally, while this example is motivating as it yields no KKO equilibrium, and we must have $b>0$ for the same reason, we can easily take a limit as $b\to 0$ to obtain a limiting equilibrium where agent 2 extracts \emph{all} the rents from its neighbors.
This leads to an intuitive extension of the KKO equilibrium concept, which remains well-defined even for agents with zero endowments: a limiting equilibrium of a $\b$-resale economy as $\b \to \0$.
This limiting resale equilibrium yields an efficient allocation in both this example and the modification above with $\u^2 = (0,1)$.
In contrast, the corresponding limiting $\epsilon$-KKO equilibrium as $\epsilon\to 0$ suffers from the same problem as the non-limit version:
for the modified example with $\u^2 = (0,1)$, the first good is ``trapped'' in a local economy regardless of $\epsilon$, and thus in the limit.

\section{Existence of Equilibria}
\label{sec:existence}
We prove the existence of equilibria in graphical economies with resale that satisfy five assumptions.
Our assumptions are weaker than those by~\citet{arrow1954existence}~and~\citet{kakade2004graphical}, and are closer to those made by~\citet{maxfield1997general}~and~\citet{mckenzie1981classical}, which are some of the mildest assumptions needed for proving the existence of equilibria in AD economies.
Our proof follows a similar trajectory to~\cite{kakade2004graphical}, but requires substantially more work due to our relaxations of their assumptions;
in particular, endowments may be zero, and as such desired goods may not be directly available in an agent's neighborhood.
At a high level, the proof strategy can be thought of as a three step process:
(\S\ref{subsec:quasi_equilibria}) a proof that \emph{quasi-equilibria} exist, a weaker equilibrium concept which relaxes rationality; (\S\ref{subsec:connected_comps}) assumptions that ensure these quasi-equilibria are in fact $\b$-resale equilibria when the underlying graph is connected; and (\S\ref{subsec:general_exists}) a proof that these assumptions guarantee the general existence of $\b$-resale graphical equilibria by establishing equilibria in every connected component of the underlying graph.

The remainder of the section is dedicated to proving existence (Theorem~\ref{thm:existence}) and will introduce the necessary concepts, assumptions, and lemmas.
For each result we will give a proof sketch, with full proofs in Appendix~\ref{append:exist}.
Other than Assumption~\ref{assumption:reachable} which is specific to graphical settings, our assumptions are standard utility and non-degeneracy assumptions in line with those made by~\citet{arrow1954existence},~\citet{kakade2004graphical},~and~\citet{maxfield1997general}. 
The assumption specific to our setting is rather weak, and essentially states that the underlying graph $G$ does not create pathological local economies where agents cannot access the goods for which they have utility.

\subsection{Existence of Quasi-Equilibria with Resale}
\label{subsec:quasi_equilibria}
The concept of quasi-equilibria was first introduced by~\citet{debreu1962new} to do away with the assumption in~\cite{arrow1954existence} that every agent in the economy has non-zero endowments of every good.
(In particular, as discussed in~\S\ref{sec:resale}, these criticisms apply to $\epsilon$-KKO equilibria.)
\citet{mckenzie1981classical} gives a detailed discussion of quasi-equilibria and AD equilibria, while \citet{maxfield1997general} shows that analogous quasi-equilibria exist in AD economies with production.
Intuitively, quasi-equilibria are relaxations of equilibria: agents with wealth behave exactly as they would at equilibrium, while agents without wealth need not satisfy individual rationality.
As rationality of agents with zero wealth applies only to goods with zero prices, quasi-equilibria are used as a stepping stone to ultimately rule out zero prices and/or zero wealth, in which case they coincide with the usual equilibria.

We say an agent $i$ is \emph{rational} if optimal arbitrage and individual rationality (conditions~2 and~3 of Definition~\ref{def:resale_equilibrium}) are both satisfied.
Our definition of $\b$-resale quasi-equilibria is equivalent to its counterpart in AD production economies.
\begin{definition}\label{def:quasi_equi}
A \emph{$\b$-resale quasi-equilibrium} of a graphical economy with resale is a set of globally normalized prices $\p$ (i.e.~$\sum_{i \in [m],k \in [\ell]} p^i_k=1$), a set of consumption plans $\x$, and a set of resale plans $\y$, in which the local markets clear, optimal arbitrage is enforced, and for each agent $i$ with wealth $\beta_i = \p^i \cdot \e^i + \sum_{j \simeq i} (\p^i-\p^j) \cdot \y^{ij}$, the following conditions hold:
\begin{enumerate}
    \item[1.] (Rational) If agent $i$ has positive wealth ($\beta_i > 0$), then $\x^i\in C_i(\p,\beta_i)$, and thus $i$ is rational.
    \item[2.] (Quasi-Rational) If agent $i$ has no wealth ($\beta_i = 0$), then $\x^i$ is budget constrained, but need not be in $C_i(\p,\beta_i)$, i.e., individual rationality need not hold.
    \end{enumerate}
\end{definition}

As noted above, quasi-equilibria are a powerful tool for proving existence, for the following reasons.
First, agents can only consume while being quasi-rational if some good in their neighborhood has price zero since they have no wealth.
Second, if all agents are behaving rationally, then the $\b$-resale quasi-equilibrium is exactly a $\b$-resale equilibrium.
These facts play a crucial role in our proof---as they did in e.g.~\cite{debreu1962new,mckenzie1981classical,maxfield1997general,kakade2004graphical}---because once we know a quasi-equilibrium exists, it suffices to show that all endowed agents are rational and all non-endowed agents have strictly positive local price vectors.
Given these statements, an agent either has positive wealth, and thus is rational, or is forced to consume nothing.

We now turn to the various assumptions needed to establish the general existence of $\b$-resale quasi-equilibria (Lemma~\ref{lem:quasi_equilibria}).
The first is a set of conditions on agents' utility functions, which are typical in mathematical economics and match those needed in~\cite{arrow1954existence,kakade2004graphical,maxfield1997general}.

\begin{assumption}\label{assumption:utilities}
For all agents $i \in [m]$, prices $\p$, and $\x^{i} \in C_i(\p,\p^i \cdot \e^i + \sum_{j \simeq i} (\p^i - \p^j) \cdot \y^{ij})$, the utility function $u_i$ such that consumption plans $\x^{i} \in \argmax u_i(\sum_{j \simeq i} \x^{ij})$ satisfies:
\begin{enumerate}
    \item[(i)] (Continuity) $u_i$ is a continuous function.
    \item[(ii)] (Non-Satiability) There exists $k \in [\ell]$ such that $u_i$ is strictly monotone increasing in $k$.
    \item[(iii)] (Quasi-Concavity) Let $\x, \x' \in \reals^\ell_+$. If $u_i(\x') > u_i(\x)$ then $u_i(\alpha \x' + (1 - \alpha) \x) > u_i(\x)$ for all $0 < \alpha \leq 1$.
\end{enumerate}
\end{assumption}
The utility conditions of Assumption~\ref{assumption:utilities} in turn influence the agents' consumption demand systems.
As was true in~\cite{kakade2004graphical}, Assumption~\ref{assumption:utilities} in the graphical setting, together with individual rationality, implies that in any $\b$-resale equilibrium agents only demand a good for consumption at the cheapest price available in their neighborhood.
Stated formally, by Assumption~\ref{assumption:utilities} and individual rationality, for agents $i \in [m]$, $j \simeq i$, and good $k \in [\ell]$, if $x^{ij}_k > 0$ then $p^j_k \leq p^{\hat{j}}_k$ for all $\hat{j} \simeq i$. 
In addition, non-satiability ensures every agent $i$ has at least one good $k$ that $i$ demands an infinite amount of if offered a zero price; in discussions to come, when we say $i$ \emph{is non-satiable in} $k$ we are referring to this phenomenon.

Assumption~\ref{assumption:resale} establishes weak conditions on resale demand systems so that satisfying optimal arbitrage behaves as we might typically expect in economic models.
Firstly the assumption ensures that if an agent can resell profitably, then they do resell something without discriminating against specific goods or agents as long as it is profitable.
Secondly the assumption ensures that if an agent $i$ can resell some good $k \in [\ell]$ profitably, and a neighbor is offering $k$ for free, then $R_i$ will be empty since no finite amount of good $k$ can saturate $i$'s credit (analogous to non-satiation in consumption).
Finally, the assumption ensures that agents without the capacity to resell ($b_i = 0$) do not resell.
\begin{assumption}\label{assumption:resale}
  For all agents $i \in [m]$, $R_i$ is a normalized resale demand system with a closed and convex set $\mathcal{Y}_i(\p,b_i)$ of feasible resale plans.
  Furthermore, for all $i\in[m]$ with $b_i > 0$, 
\begin{itemize}
    \item[(i)] if there exist $j \simneq i, k \in [\ell]$ with $0 < p^j_k < p^i_k$, then $\|\y^{i}\|_1 > 0$ for all $\y^i \in R_i(\p,b_i)$;
    \item[(ii)] if there exist $j \simneq i, k \in [\ell]$ with $0 = p^j_k < p^i_k$, then $R_i(\p,b_i)$ will be empty.
\end{itemize}
\end{assumption}
Together, these assumptions imply the existence of quasi-equilibria in our setting.
The proof boils down to observing that graphical economies with resale are a case of AD economies with production.
From this relationship it is easy to check that the preconditions for quasi-equilibrium existence highlighted in~\cite{maxfield1997general} are satisfied.
\begin{lemma}\label{lem:quasi_equilibria}
In any graphical economy with a resale in which Assumptions~\ref{assumption:utilities}~and~\ref{assumption:resale} hold, there exists a $\b$-resale quasi-equilibrium.
\end{lemma}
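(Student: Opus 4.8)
The plan is to prove the lemma by reduction: encode the graphical economy with resale as an Arrow--Debreu economy with production over $m\ell$ \emph{tagged} goods, and then invoke the quasi-equilibrium existence result of~\cite{maxfield1997general}. For each agent $j \in [m]$ and good $k \in [\ell]$ introduce a tagged good $(j,k)$, interpreted as ``good $k$ offered for sale by agent $j$''. Agent $i$'s endowment of good $k$ becomes an endowment of the tagged good $(i,k)$, and agent $i$'s utility is taken to be $u_i$ evaluated on the untagged bundle $\sum_{j \simeq i} \x^{ij}$, where $x^{ij}_k$ is now the amount of tagged good $(j,k)$ that $i$ consumes (forced to $0$ whenever $j \not\simeq i$). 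Resale is modeled as a firm owned by agent $i$ whose only technology converts one unit of a neighbor's tagged good $(j,k)$ into one unit of its own tagged good $(i,k)$; a resale plan $\y^i$ thus corresponds to the net production vector that consumes $\sum_{j\simeq i} \y^{ij}$ in the $(j,k)$ coordinates and produces the same amounts in the $(i,k)$ coordinates.

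Next I would set up the dictionary between the two models and check that equilibria correspond. Local prices $\p^i$ become the prices on the tagged goods $(i,k)$, so that the profit of agent $i$'s resale firm, $\sum_{j \simeq i}(\p^i - \p^j)\cdot\y^{ij}$, is exactly the value of outputs minus the value of inputs; hence \emph{optimal arbitrage} (Condition~2 of Definition~\ref{def:resale_equilibrium}) coincides with profit maximization of the firm over its feasible set, and the wealth $\beta_i = \p^i\cdot\e^i + \sum_{j\simeq i}(\p^i - \p^j)\cdot\y^{ij}$ is the agent's endowment income plus firm profit. Under this translation, the local clearing condition of a $\b$-resale quasi-equilibrium is precisely market clearing in the tagged-good production economy, while the Rational/Quasi-Rational dichotomy (Definition~\ref{def:quasi_equi}) is exactly the quasi-equilibrium relaxation of individual rationality for zero-wealth agents. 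Thus a quasi-equilibrium of the constructed production economy, read back through this dictionary, is a $\b$-resale quasi-equilibrium.

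It then remains to verify Maxfield's hypotheses for the constructed economy. The utility conditions---continuity, non-satiability, and quasi-concavity---transfer verbatim from Assumption~\ref{assumption:utilities}, since the tagged utility is just $u_i$ composed with a linear aggregation map. The production (resale) sets are closed, convex, and contain the zero vector (inaction) by Assumption~\ref{assumption:resale}; the ``no free lunch'' property holds because each conversion consumes an input to create an output, so no tagged good can be produced without destroying another; and the crucial boundedness of the attainable set follows from the resale bounds $b_i$, which cap the volume (or credit value) of each firm's activity and thereby keep aggregate production bounded, even though mutually adjacent agents could otherwise run resale in a cycle.

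The main obstacle is the price-dependence of the feasible resale set. For general resale demand systems---and credit-bound resale in particular---$\mathcal{Y}_i(\p,b_i)$ varies with $\p$, whereas classical production economies assume a fixed production set. I would handle this by casting the reduction as a generalized (abstract) economy in which each resale firm's feasible correspondence depends on the price vector, and verifying the hemicontinuity and convex-valuedness needed for the fixed-point argument underlying~\cite{maxfield1997general}: closedness and convexity of $\mathcal{Y}_i(\p,b_i)$ come from Assumption~\ref{assumption:resale}, while continuity in $\p$ must be established (e.g.\ the credit constraint $\sum_{j\simeq i}\p^j\cdot\y^{ij}\le b_i$ defines a continuous, convex-valued correspondence on the price simplex). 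Once these regularity conditions are in place, Maxfield's theorem yields a quasi-equilibrium of the production economy, and the dictionary above delivers the desired $\b$-resale quasi-equilibrium; the remaining bookkeeping---global price normalization and the fact that zero-wealth agents are only budget-constrained---is routine.
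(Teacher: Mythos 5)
Your proposal follows essentially the same route as the paper: encode the graphical economy with resale as an AD production economy over $m\ell$ tagged goods $(j,k)$ with one resale firm per reselling agent converting $(j,k)$ into $(i,k)$, verify the hypotheses of~\cite{maxfield1997general}, and read the resulting quasi-equilibrium back through the dictionary. You are in fact somewhat more careful than the paper on one point---the price-dependence of the feasible resale set $\mathcal{Y}_i(\p,b_i)$ (e.g.\ under credit bounds), which the paper's proof does not explicitly address---so your proposal is, if anything, a more complete version of the same argument.
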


\subsection{Propagation of Rationality in Connected Components}
\label{subsec:connected_comps}
With quasi-equilibria in hand, we turn to assumptions that ensure all agents are rational.
First, every agent in the economy must be able to participate with every good in the economy.
\begin{assumption}\label{assumption:participant}
For each consumer $i \in [m]$, $b_i > 0$ or $\e^i > \0$.
\end{assumption}
When $\b = \0$, Assumption~\ref{assumption:participant} is analogous to assumptions made in both \citet{arrow1954existence} and \citet{kakade2004graphical}.
When $\b > \0$, every agent has some capacity to resell, and for sufficiently small~$\b$, Assumption~\ref{assumption:participant} behaves like a weaker form of its counterpart in those works.\footnote{Both~\citet{arrow1954existence}~and~\citet{kakade2004graphical} require that every agent is endowed with a non-zero amount of every good. 
When $\b>\0$, Assumption~\ref{assumption:participant} allows agents to have sparse endowment vectors even when $\b$ is set to such small values that resale is virtually non-existent in the economy.}
This assumption can be thought of as a near-minimal requirement to participate in the economy.
It may be possible to replace $\e^i > \0$ with $\e^i \neq \0$, but our current proof techniques would require strengthening Assumption~\ref{assumption:reachable} for technical reasons related to the propagation of non-zero prices. 

Our next assumption simply guarantees that every good exists at least somewhere in the economy.
Intuitively, this assumption asserts that in order for a good to influence a competitive economy it needs to exist.
Together with Assumption~\ref{assumption:reachable} introduced below, this assumption ensures all goods in the economy play a role.
\begin{assumption}\label{assumption:useful_goods}
For every good $k \in [\ell]$, there exists an agent $i \in [m]$ such that $e^i_k > 0$.
\end{assumption}

Lastly, in Assumption~\ref{assumption:reachable}, we will enforce that the underlying economy graph is sufficiently connected in a supply-demand sense; this assumption ensures both that agents have access to goods they are non-satiable on, and that the economy can function properly if resale is not profitable.
Ensuring access to desired goods is important on a technical level since otherwise clearing may be impossible, e.g., if agents spend money on goods that are unavailable to them.
Similarly, absent any assumption, when agents with sparse endowment vectors are unable to profit from resale, it is possible to construct graphical economies where entire sectors of the economy behave nonsensically.
An interesting corollary of Assumption~\ref{assumption:reachable} is that unendowed agents cannot be picky: there cannot exist goods in which only unendowed agents are non-satiable.

To formally state Assumption~\ref{assumption:reachable}, call a path between agents $i,j\in[m]$ a \emph{trade path} if every node $\hat i \notin\{i,j\}$ on the path has $b_{\hat i} > 0$.
We write $P(i,j)$ for the set of agents on some trade path between $i$ and $j$; by convention $i,j\in P(i,j)$, and by definition, $P(i,j) = P(j,i)$.
Also, define the \emph{supply graph of~$k$} as the directed graph $G^S_k=([m],E^S_k)$ with $(i,j)\in E^S_k$ if $e^i_k > 0$, $j$ is non-satiable in $k$, and there is a trade path between $i$ and $j$.
Edges $(i,j)$ in $G^S_k$ are present whenever $i$ could directly or indirectly sell their endowment of $k$ to $j$, when price conditions allow for profitable resale in between, if applicable.
Finally, let $G^S=([m],E^S)$ be the (non-disjoint) union of supply graphs on all goods, with $E^S = \bigcup_{k\in[\ell]} E^S_k$.
In graph theoretic terms, $G^S$ encodes all possible directed source-sink pairs in the economy, and each $G^S_k$ encodes source-sink pairs for the good $k$.

\begin{assumption}\label{assumption:reachable}
For every agent $i \in [m]$ and good $k \in [\ell]$:
\begin{enumerate}
    \item[(i)] If $i$ is non-satiable in $k$ then there exists an incoming edge to $i$ in $G^S_k$.
    \item[(ii)] If $\e^i = \0$, there exists an edge in $G^S_k$ from $\hat i$ to $\hat j$ such that $i \in P(\hat i, \hat j)$ and $\e^{\hat j} \neq \0$.
    \item[(iii)] Furthermore, for each connected component $C$ of the underlying economy graph $G$, the subgraph of $G^S$ induced by $\{j \in C : \e^j \neq \0\}$ is strongly connected.
\end{enumerate}
\end{assumption}
One technical interpretation of Assumption~\ref{assumption:reachable} is that we cannot partition any connected component of the underlying trade graph $G$ into two sets of endowed agents where one group cannot supply any good the other group wants.
This assumption is very closely related to the irreducibility assumption used to prove existence of AD equilibria by~\citet{mckenzie1959existence,mckenzie1961existencecorrections}.
While \citet{mckenzie1959existence} introduces irreducibility on the entire economy as a sufficient assumption for proving existence, our assumption need only hold on each connected component.
The supply graph $G^S$ serves a similar purpose to the directed graphs constructed in~\citet{maxfield1997general} for proving existence, though we place weaker assumptions on consumers in exchange for slightly stronger assumptions on $G^S$.  

Together Assumptions~\ref{assumption:utilities}--\ref{assumption:reachable} provide powerful guarantees: certain non-zero prices must propagate in the presence of a rational agent (Lemma~\ref{lem:prices_propagate}), and any $\b$-resale quasi-equilibrium must have at least one \emph{endowed} rational agent (Lemma~\ref{lem:exists_rational_agent}).
We briefly sketch the argument here.
Assumption~\ref{assumption:utilities} ensures an agent $i$ is non-satiable in some good $k$, therefore if $i$ is rational then $k$ must have a non-zero price in $i$'s neighborhood for the market to clear.
If agent $i$ has $b_i > 0$ and $p^i_k > 0$ for some good $k$, then Assumption~\ref{assumption:resale} implies that $k$ must have a non-zero price in $i$'s neighborhood for the same reason.
Lastly, by definition quasi-equilibrium prices must have at least one non-zero price.
Coupling these facts and recursively applying them ensures a non-zero price propagates along trade paths and hence throughout supply graphs---this insight plays a vital role throughout our proof of existence.

To conclude, we show that rationality propagates in connected components of graphical economies much like non-zero prices do, so long as these assumptions are satisfied.
\begin{lemma}\label{lem:all_rational_agents}
If a connected graphical economy with resale satisfies Assumptions~\ref{assumption:utilities}--\ref{assumption:reachable}, then for any $\b$-resale quasi-equilibrium it holds that every agent is rational.
\end{lemma}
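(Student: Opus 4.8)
The plan is to reduce rationality to a wealth condition and then propagate it through the graph. Recall that in a $\b$-resale quasi-equilibrium optimal arbitrage (condition~2 of Definition~\ref{def:resale_equilibrium}) always holds by definition, so being \emph{rational} reduces to satisfying individual rationality (condition~3); and by the Rational clause of Definition~\ref{def:quasi_equi}, any agent with positive wealth $\beta_i > 0$ automatically satisfies this. Hence it suffices to show that every agent either has $\beta_i>0$, or has $\beta_i = 0$ but is nonetheless optimizing --- i.e.\ no desired good is available for free, so that consuming nothing lies in $C_i(\p,0)$. Concretely I would split into endowed and unendowed agents and show the former always reach positive wealth, while the latter either profit from resale or are forced into the trivial optimal plan. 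The two engines driving this are Lemma~\ref{lem:exists_rational_agent} (some endowed agent is rational to seed the argument) and Lemma~\ref{lem:prices_propagate} (non-zero prices spread out from a rational agent).

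First I would establish the base case. By Lemma~\ref{lem:exists_rational_agent} there is at least one endowed rational agent $i_0$; since $i_0$ is rational and, by Assumption~\ref{assumption:utilities}(ii), non-satiable in some good $k_0$, market clearing forces a non-zero price on $k_0$ somewhere in $i_0$'s neighborhood, which seeds Lemma~\ref{lem:prices_propagate}. Next I would induct over the supply graph $G^S$ restricted to endowed agents, which is strongly connected on the (single) connected component by Assumption~\ref{assumption:reachable}(iii). Along each supply edge $(i,j)$ the corresponding trade path has brokers with $b>0$ in its interior; I would push a non-zero price step by step across this path, using Assumption~\ref{assumption:resale}(ii) at each broker (a broker with $b_i>0$ facing a free copy of a good it prices positively would have $R_i(\p,b_i)=\emptyset$, contradicting optimal arbitrage, so a positive price must persist in its neighborhood) and non-satiation plus clearing at consuming endpoints. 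Strong connectivity then guarantees that a good relevant to each endowed $j$ --- one it is endowed with, or one it can profit on --- carries a positive price, so $\beta_j \geq \p^j\cdot\e^j > 0$ and $j$ is rational.

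It then remains to treat unendowed agents. By Assumption~\ref{assumption:participant} such an agent $i$ has $b_i>0$, and by Assumption~\ref{assumption:reachable}(ii) it lies on a trade path $P(\hat i,\hat j)$ joining an endowed source--sink pair forming an edge of $G^S_k$ with $\e^{\hat j}\neq\0$. Since every endowed agent is now rational and carries non-zero prices, I would again invoke Lemma~\ref{lem:prices_propagate} to conclude that $i$ has strictly positive local prices $\p^i > \0$. From here there are two cases: if some price differential across an edge incident to $i$ makes resale profitable, then Assumption~\ref{assumption:resale}(i) gives $\|\y^i\|_1>0$ and hence $\beta_i = \sum_{j\simeq i}(\p^i-\p^j)\cdot\y^{ij} > 0$, so $i$ is rational; otherwise all relevant neighboring prices coincide, no good that $i$ desires is offered for free, and the budget-constrained quasi-rational plan of $i$ is in fact optimal, so $\x^i \in C_i(\p,\beta_i)$ and $i$ is again rational.

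The hard part will be the propagation induction in the middle two paragraphs: making precise that a single seeded non-zero price genuinely travels along \emph{every} trade path, and --- crucially --- that it lands on a good that contributes to the target agent's wealth rather than on an irrelevant good. This requires simultaneously juggling Assumption~\ref{assumption:resale} at brokers and non-satiation/clearing at consumers, and carefully matching the good indices along the supply edges of $G^S_k$. The subtlest step is the unendowed zero-wealth case, where rationality demands ruling out the availability of any free desired good, not merely showing the agent consumes nothing; this is exactly where the strict positivity $\p^i>\0$ and the trade-path structure of Assumption~\ref{assumption:reachable} must be combined.
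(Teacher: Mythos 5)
Your proposal is correct and follows essentially the same route as the paper's proof: seed rationality with Lemma~\ref{lem:exists_rational_agent}, propagate non-zero prices along trade paths via Lemma~\ref{lem:prices_propagate} and the strong connectivity of $G^S$ among endowed agents (Assumption~\ref{assumption:reachable}(iii)) to make every endowed agent rational, then use Assumption~\ref{assumption:reachable}(ii) to force strictly positive local prices at unendowed agents so that they are either wealthy (hence rational) or can only consume $\0$, which is trivially optimal. The steps you flag as delicate (matching good indices along supply edges and the zero-wealth case) are resolved in the paper exactly as you outline, so no genuine gap remains.
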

The proof of this final lemma has a very similar intuition as the proof of Lemma~\ref{lem:exists_rational_agent}, but is on a grander scale.
We know rationality implies non-zero prices by Assumption~\ref{assumption:utilities}(ii), that non-zero prices propagate along trade paths by Lemma~\ref{lem:prices_propagate}, and that endowed agents are on trade paths with all other endowed agents by Assumption~\ref{assumption:reachable}(iii).
Therefore, since we know a rational agent exists by Lemma~\ref{lem:exists_rational_agent}, these facts combine to ensure non-zero prices propagate such that all endowed agents must be rational.
In addition, Assumption~\ref{assumption:reachable}(ii) ensures all non-endowed agents lie on a trade path with some endowed agent at the end of it for each of the goods in the economy, and from Lemma~\ref{lem:prices_propagate} we know that non-zero prices propagate on trade paths connected to rational agents.
With some bookkeeping, it follows that non-endowed agents must have strictly positive local price vectors.
Thus, either the non-endowed agents are able to resell and have wealth, or they have no wealth and cannot consume anything as all prices in their neighborhood are non-zero.

\subsection{General Existence of Resale Equilibria}
\label{subsec:general_exists}
Combining all five assumptions, and Lemma~\ref{lem:all_rational_agents}, we are now ready to show existence of equilibria (Theorem~\ref{thm:existence}).
It remains only to show is that Lemma~\ref{lem:all_rational_agents} holds on each connected component of a graphical economy.
By definition, if Assumptions~\ref{assumption:utilities},~\ref{assumption:resale},~\ref{assumption:participant},~and~\ref{assumption:reachable} hold for a disconnected graphical economy with resale, then the assumptions also hold for each connected component of the graphical economy.
In addition, if Assumption~\ref{assumption:useful_goods} does not hold for some connected component $C$ of the underlying trade graph $G$, then there is some $k \in [\ell]$ with $e^i_k = 0$ for every $i \in C$. 
As Assumption~\ref{assumption:reachable} asserts that each $i$ has an incoming edge in $G^S_{k'}$ for each $k' \in [\ell]$ that $i$ is non-satiable in, and that $C$ is strongly connected among suppliers in $G^S$, it follows that there cannot exist an agent $j \in C$ that is non-satiable in $k$.
Therefore, by appropriately scaling their prices, we know that goods contradicting Assumption~\ref{assumption:useful_goods} can be neglected at equilibrium in $C$, and Assumptions~\ref{assumption:utilities}--\ref{assumption:reachable} hold in $C$ for the set of goods $[\ell] \setminus \{k \in [\ell] : e^i_k = 0 \text{ for every } i \in C \}$.

As was just discussed, each connected component $C$ of the graphical economy's underlying trade graph $G$ satisfies Assumptions~\ref{assumption:utilities}--\ref{assumption:reachable} (possibly by removing some ``useless goods'').
Thus, by Lemma~\ref{lem:all_rational_agents}, every agent in each connected component $C$ of the graphical economy is rational and, by definition, $C$ has its own $\b$-resale equilibrium.
Finally, by definition, the union of equilibria in uncoupled graphical economies is an equilibrium of the entire economy and so we know that a $\b$-resale equilibrium exists in any graphical economy satisfying our five assumptions.
Stated formally, we have the following theorem.

\begin{theorem}\label{thm:existence}
In any graphical economy satisfying Assumptions~\ref{assumption:utilities}--\ref{assumption:reachable}, there exists a $\b$-resale equilibrium.
\end{theorem}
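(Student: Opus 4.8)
The plan is to obtain the theorem by combining the two lemmas already established---quasi-equilibrium existence (Lemma~\ref{lem:quasi_equilibria}) and propagation of rationality (Lemma~\ref{lem:all_rational_agents})---with a reduction to connected components. First I would invoke Lemma~\ref{lem:quasi_equilibria}, which requires only Assumptions~\ref{assumption:utilities} and~\ref{assumption:resale}, to guarantee that a $\b$-resale quasi-equilibrium $(\p,\x,\y)$ exists. The central observation (noted in~\S\ref{subsec:quasi_equilibria}) is that a quasi-equilibrium in which \emph{every} agent is rational is, by Definitions~\ref{def:quasi_equi} and~\ref{def:resale_equilibrium}, already a $\b$-resale equilibrium: the only gap between the two notions is the quasi-rationality clause permitting zero-wealth agents to violate individual rationality, and this clause is vacuous once all agents are rational. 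Hence the entire task reduces to certifying rationality of every agent, which is exactly the content of Lemma~\ref{lem:all_rational_agents}.

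The obstacle is that Lemma~\ref{lem:all_rational_agents} assumes the economy is connected, whereas the theorem makes no such hypothesis. To bridge this gap I would decompose the underlying graph $G$ into its connected components and argue that the five assumptions descend to each component $C$. Assumptions~\ref{assumption:utilities},~\ref{assumption:resale},~\ref{assumption:participant}, and~\ref{assumption:reachable} are stated pointwise over agents, goods, and---for reachability---over trade paths and supply graphs, none of which can cross a component boundary, so they hold verbatim on $C$. The delicate point is Assumption~\ref{assumption:useful_goods}, which may fail on $C$: there could be a good $k$ with $e^i_k = 0$ for every $i \in C$. Here I would lean on Assumption~\ref{assumption:reachable}(i), which forces any agent non-satiable in $k$ to have an incoming edge in $G^S_k$, i.e.\ an endowed supplier of $k$ reachable by a trade path; since trade paths remain within $C$ and no agent of $C$ is endowed with $k$, no such edge can exist, and thus no agent in $C$ is non-satiable in the absent good $k$ (with part~(iii) confirming that the endowed agents of $C$ form a single strongly connected supply block). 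Consequently $k$ exerts no demand pressure in $C$, and by scaling its price appropriately it can be dropped without disturbing clearing or rationality. After discarding these useless goods, the restricted economy on $C$ with good set $[\ell]\setminus\{k : e^i_k=0\ \forall i\in C\}$ satisfies all of Assumptions~\ref{assumption:utilities}--\ref{assumption:reachable}.

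With the assumptions re-established on each component, I would apply Lemma~\ref{lem:all_rational_agents} inside every connected component $C$ to conclude that all of its agents are rational; by the observation above, the restriction of the quasi-equilibrium to $C$ is then a genuine $\b$-resale equilibrium of the subeconomy on $C$. Finally, since no trade, resale, or clearing constraint couples distinct components---every $\x^{ij}$ and $\y^{ij}$, and each clearing sum in Definition~\ref{def:resale_equilibrium}, vanishes across component boundaries---the disjoint union of these component equilibria satisfies local clearing, optimal arbitrage, and individual rationality at every node simultaneously, giving a $\b$-resale equilibrium of the full economy. I expect the main difficulty to be the useless-goods step: verifying that deleting an absent good and re-scaling its price genuinely leaves Assumption~\ref{assumption:reachable} intact on the surviving goods, rather than the comparatively routine bookkeeping of gluing the component equilibria back together.
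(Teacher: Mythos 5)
Your proposal follows essentially the same route as the paper: reduce to connected components, observe that Assumptions~\ref{assumption:utilities},~\ref{assumption:resale},~\ref{assumption:participant}, and~\ref{assumption:reachable} descend to each component, use Assumption~\ref{assumption:reachable}(i) and~(iii) to show that a good absent from a component has no non-satiable agents there and can be dropped after rescaling its price, then apply Lemma~\ref{lem:all_rational_agents} componentwise and take the disjoint union of the resulting equilibria. The argument and its key steps (including the useless-goods handling, which is indeed the only delicate point) match the paper's proof.
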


The conditions of Theorem~\ref{thm:existence} are straightforward to check.
Given the utility functions, resale constraints, endowments, and underlying trade graph $G$ in standard forms (e.g.~encoded in matrices), we can construct $G^S$ and check the conditions in Assumptions~\ref{assumption:utilities}--\ref{assumption:reachable} in polynomial time using standard graph algorithms with additional bookkeeping. 
For example, consider checking the assumptions for the broker example in~\S\ref{sec:resale}. 
Assumption~\ref{assumption:utilities} is easily verified for linear utilities; Assumption~\ref{assumption:resale} is also easily verified for credit bound resale; Assumption~\ref{assumption:participant} follows from all agents having $b_i > 0$;  Assumption~\ref{assumption:useful_goods} is satisfied by the endowments of Agents 1 and 3; Assumption~\ref{assumption:reachable} is satisfied as Agents 1 and 3 are intermediated by Agent 2 and so all agents have access to the goods they want, Agent 2 is on every trade path, and $G^S$ is connected on Agents 1 and 3.

\section{Computing Equilibria}
\label{sec:algo}
Computing equlibria is known to be a computationally hard problem in general market models~\cite{garg2017settling}.
In addition to proving existence of equilibria in graphical economies with resale, we also introduce an auction-based algorithm that finds an approximate equilibrium.
Our algorithm generalizes the settings considered by~\citet{kapoor2005auction} and~\citet{garg2006auction}, and has a natural interpretation as an ascending price auction held simultaneously at every node in the graphical economy.
Due to the local nature of the auctions conducted, the algorithm has an additional benefit of being easily implementable in a decentralized and asynchronous manner.
In this section we first introduce the technical preliminaries needed for our algorithm, such as the weak gross substitutes property, along with its guarantees~(\S\ref{subsec:prelim_and_technical}).
We then overview the algorithm~(\S\ref{subsec:algo_overview}) and give proof sketches of correctness and time complexity~(\S\ref{subsec:complexity_correctness_overview}).
See Appendix~\ref{append:algo} for the full pseudo-code and proofs.

\subsection{Preliminaries \& Technical Guarantees}
\label{subsec:prelim_and_technical}

Two additional assumptions on demand systems are needed for our algorithm.
These assumptions are standard in auction-based algorithms for approximating equilibria of markets~\cite{kapoor2005auction,garg2006auction,garg2019auction}.
First, we assume that the consumption and resale demand systems $C_i$ and $R_i$ are scale invariant for every agent $i \in [m]$, which is sufficient to guarantee that there exists an equilibrium for our algorithm to approximate for any initial prices.
For example, if $\p$ is a set of prices at a $\b$-resale equilibrium, then scale invariance ensures $\alpha \p$ is also a set of equilibrium prices for all $\alpha > 0$.
The second assumption, weak gross substitutability, appears necessary for agents to myopically update consumption and resale plans without having to retract previous decisions.
\begin{definition}
  \label{def:wgs-oracle}
  We say that the demand system $D$ satisfies the \emph{weak gross substitutability} (WGS) property if for any $\hat \p \geq \p$, $\hat \beta \geq \beta$, and $\rho \in D(\p,\beta)$, there exists $\hat \rho \in D(\hat \p,\hat \beta)$ such that $\hat \rho_k \geq \rho_k$ whenever $\hat p^{j}_k = p^j_k$ for $j \in [m]$ and $k \in [\ell]$.
  If $D$ satsifies WGS, a \emph{demand oracle} is an algorithm which produces such a $\hat \rho$ as output when given $\hat \p, \p, \hat \beta, \beta$, and $\rho \in D(\p,\beta)$ as input.
\end{definition}
We assume that, for every agent $i \in [m]$, the consumption demand system $C_i$ satisfies WGS and that the negation of the resale demand system $-R_i$ satisfies WGS.\footnote{By saying $-R_i$ satisfies WGS, we mean that the demand system resulting from negating every resale plan that is the output from $R_i$ is WGS. From the reduction in~\cite{garg2015markets}, this is equilvalent to satisfying WGS for production.}
Intuitively, this assumption means that raising the price on a good does not decrease the consumption nor increase the resale of any other good.
As such, consumption and resale demand systems that satisfy WGS ensure that local price raises do not leave agents regretting previous decisions.
This property will be integral for the algorithm to find approximate equilibria efficiently since it will monotonically increase prices by regular increments.
These assumptions are identical to those used in~\citet{garg2019auction}, which introduces a different auction based algorithm for the pure exchange setting; we refer to that work for a comprehensive discussion of WGS consumption demand systems.

Throughout the algorithm, agents will consult a \emph{consumption oracle} and a \emph{resale oracle}, described in Definition~\ref{def:wgs-oracle}.
These oracles are algorithms that provide agents with a consumption and resale plan guaranteed to exist by the definitions of WGS demand systems.
In the case of consumption, the oracle only cares about satisfying the WGS property of the consumption demand system.
For resale, the oracles used in the algorithm also take into consideration an additional \emph{request} vector of the form $\textbf{r} \in \reals_+^{\ell}$ as input, and try to have close to $r_k$ of each good $k$.
These resale oracles satisfy the WGS property of the negated resale demand system, and minimize $\|\textbf{r} - \sum_{j \sim i} \hat \y^{ij}\|_1$ among all negated resale plans satisfying WGS.
Taking requests helps resale oracles cater to the actual consumption needs of the economy when multiple resale plans maximize profits.

Our algorithm finds solutions that satisfy relaxed versions of the requirements for $\b$-resale equilibria~(Definition~\ref{def:resale_equilibrium}).
We use a notion of $\left( 1 + \epsilon \right)$-approximate $\b$-resale equilibrium that is consistent with~\citet{garg2019auction}, but uses the stronger notions of approximate clearing, and approximately efficient budget use, from~\citet{garg2006auction}~and~\citet{kapoor2005auction}.

\begin{definition}
  \label{def:approx-eq}
  For an $\epsilon > 0$, the triple $(\p, \x, \y)$ of prices $\p$, consumption plans $\x$, and resale plans $\y$ is a \emph{$\left( 1 + \epsilon \right)$-approximate $\b$-resale equilibrium} if for all $i \in [m]$ the following conditions are satisfied.
  Here $\bar \p$ satisfies $\bar \p^i = \p^i$ and $\bar \p^j = \p^j/(1+\epsilon)$ for every $j \in [m] \setminus \{i\}$, and we define the budgets $\beta_i = \p^i \cdot \e^i + \sum_{j \simeq i} (\p^i - \p^j) \cdot \y^{ij}$ and $\bar \beta_i = \bar\p^i \cdot \e^i + \sum_{j \simeq i} (\bar\p^i - \bar\p^j) \cdot \y^{ij}$.
    \begin{enumerate}
        \item[1.] \emph{Approximate Clearing}. \;\;
        \(
            \frac{e^i + \sum_{j \simeq i} y^{ij}}{\left( 1 + \epsilon \right)} \leq \sum_{j\simeq i} x^{ji} + \sum_{j\simeq i} y^{ji} \leq e^i + \sum_{j \simeq i} y^{ij}~.
        \)
        \item[2.] \emph{Approximately Optimal Arbitrage}. \; $\y^i \leq \bar\y^i$ for some $\bar\y^i \in R_i\left(\bar \p,b_i\right)$.
        \item[3.] \emph{Approx.\ Individual Rationality}. \; $\x^i \leq \bar\x^{i}$ for some $\bar\x^i \in C_i\left(\frac{\p}{1+\epsilon}, \bar\beta_i\right)$.
        \item[4.] \emph{Approximately Efficient Budget Use}. \; 
        \(
            \frac{\beta_i}{1 + \epsilon} \leq \sum_{j \simeq i} \p^j \cdot \x^{ij} \leq \beta_i~.
        \)
    \end{enumerate}
\end{definition}
In other words, market clearing is within a multiplicative factor of $(1+\epsilon)$, budgets are all spent within a multiplicative factor of $(1+\epsilon)$, and both the consumption and resale plans are subsets of optimal plans within a multiplicative factor of $(1+\epsilon)$ of prices and budgets.

Using the ideas introduced above, our algorithm achieves the following guarantees.
As is often the case in auction-based algorithms, the time complexity is expressed in terms of the largest price found by the algorithm~\cite{kapoor2005auction,garg2006auction,garg2019auction}, which we denote by $p_{max}$.
An upper bound on $p_{max}$ can be found for specific sets of demand and resale systems, i.e.~when these systems are given by explicit utility functions and resale constraints.
Unfortunately, as we demonstrate in~\S\ref{subsec:non_poly_pmax}, $p_{max}$ may be super-polynomial in the input size, though in some cases the algorithm still terminates in polynomial time.

\begin{theorem}\label{thm:auction_time_complexity}
  Let $N_{max} = \max_{i \in [m]} |\{j \simeq i\}|$ be the size of the largest neighborhood in $G$.
  For any graphical economy with resale satisfying Assumptions~\ref{assumption:utilities}--\ref{assumption:reachable}, and for which all consumption demand systems $C_i$ and negated resale demand systems $-R_i$ are scale invariant and satisfy WGS, a $\left( 1 + \epsilon \right)$-approximate $\b$-resale equilibrium can be found in
    \begin{equation*}
      \mathcal{O}\left(\frac{\ell}{\epsilon} C \left(m T_D + m^2  N_{max} T_R + m^3 \ell N_{max}^3 \right) \log p_{max} \right)
    \end{equation*}
    time, where $T_D$ is the time needed for one call to the demand oracle, $T_R$ is the time needed for one call to the resale oracle, $p_{max}$ is the largest price found by the algorithm, $M(R_i)$ is the maximal quantity of resold goods $\|\y^i\|_1$ for any $\y^i \in R_i(\p^*,b_i)$ and any $\p^* \in (0,p_{max}]^{m \times \ell}$, and \mbox{$C \leq \tfrac{(1+\epsilon)p_{max}}{\epsilon e_{min}} \sum_{i \in [m], k \in [\ell]} (e^i_k + M(R_i))$}.
\end{theorem}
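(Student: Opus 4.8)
The plan is to decouple correctness from complexity. For correctness I would exhibit a small set of loop invariants that the auction maintains after every step and show that, once the termination test fires, these invariants coincide with the four conditions of Definition~\ref{def:approx-eq}. The natural invariants are that prices only ever rise and remain in $(0,p_{max}]^{m\times\ell}$; that each agent's current consumption plan $\x^i$ is dominated by some optimal plan at the deflated prices $\p/(1+\epsilon)$, and likewise $\y^i$ by some optimal resale plan at $\bar\p$; and that each budget $\beta_i = \p^i\cdot\e^i + \sum_{j\simeq i}(\p^i-\p^j)\cdot\y^{ij}$ is spent up to a $(1+\epsilon)$ factor. The content here is that a $(1+\epsilon)$ price increase never forces an agent to surrender a good it already holds: this is exactly what the WGS property and the consumption/resale oracles of Definition~\ref{def:wgs-oracle} guarantee, so the approximate-optimality invariants (conditions~2 and~3) persist under the monotone price raises, while scale invariance guarantees that a genuine equilibrium exists to be approximated (via Theorem~\ref{thm:existence}). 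Since the termination test is approximate local clearing, condition~1 holds on exit and conditions~2--4 follow from the invariants by a step-by-step induction.

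The heart of the proof, and the step I expect to be the main obstacle, is bounding the number of iterations. In the pure-exchange auctions of~\cite{kapoor2005auction,garg2006auction} the analysis rests on a conserved pool of money that flows monotonically from buyers to sellers, so that the number of reallocation events is controlled by the fixed total budget. Resale destroys this conservation on two fronts: budgets are endogenous and increase with resale profit $\sum_{j\simeq i}(\p^i-\p^j)\cdot\y^{ij}$, and goods move in cycles as agents buy in order to resell, so there is no single monotone direction of money flow. My plan is to introduce a potential equal to the total price-weighted mass of goods physically present in the economy---endowments together with whatever is currently carried for resale---and to establish two bounds. First, this potential never exceeds $p_{max}\sum_{i\in[m],k\in[\ell]}(e^i_k + M(R_i))$, since at any prices in $(0,p_{max}]^{m\times\ell}$ the quantity resold through agent $i$ is at most $M(R_i)$. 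Second, each elementary reallocation moves at least a fixed amount of price-weighted mass on the order of $\epsilon\,e_{min}$, so that dividing the total by the per-event increment yields the event count $C\le\tfrac{(1+\epsilon)p_{max}}{\epsilon e_{min}}\sum_{i,k}(e^i_k + M(R_i))$. The delicate point is establishing the per-event progress lower bound in the presence of resale: one must charge the resale profit created during a round against the money it consumes, so that cyclic, value-generating flows still make bounded net progress, rather than relying on the conservation argument available without resale.

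Once $C$ is in hand, the remaining factors are routine. The number of distinct price levels is $\mathcal{O}(\tfrac{1}{\epsilon}\log p_{max})$ per good, since each price climbs from its initialization to at most $p_{max}$ in $(1+\epsilon)$-factor steps, which together with the $\ell$ goods and the loop structure accounts for the $\tfrac{\ell}{\epsilon}\log p_{max}$ factor. Within each iteration the work splits into the three listed terms: every agent consults its consumption oracle once, giving $m\,T_D$; reconciling resale plans across all neighbor pairs contributes $m^2 N_{max}\,T_R$; and redistributing goods among neighborhoods to restore the invariants reduces to a bounded-size transportation problem, which I would solve in $\mathcal{O}(m^3\ell N_{max}^3)$. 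Combining the price-level count and the reallocation-event count $C$ into the iteration bound $\tfrac{\ell}{\epsilon}C\log p_{max}$, and multiplying by this per-iteration cost, yields the stated running time. I expect the correctness invariants and the per-iteration accounting to be close adaptations of~\cite{garg2006auction,garg2019auction}; the genuinely new difficulty is the potential-based event count, where the interplay between endogenous resale budgets and cyclic good flows must be controlled to recover a clean bound.
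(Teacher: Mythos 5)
Your correctness plan matches the paper's: the same invariants (monotone prices, consumption and resale plans dominated by WGS-optimal plans at deflated prices, budgets spent up to a $(1+\epsilon)$ factor) are established inductively, and the upper bounds in conditions 1--4 of Definition~\ref{def:approx-eq} follow directly. One caveat: the algorithm's termination test is \emph{not} approximate local clearing; it is either exact local clearing or total surplus budget falling below $\tfrac{\epsilon}{1+\epsilon}\zeta$, and the lower bounds in approximate clearing and approximately efficient budget use require a separate accounting argument (Lemmas~\ref{lem:approx_clearing} and~\ref{lem:approx_budget_contr}) splitting into the cases $e^i_k>0$ and $e^i_k=0$. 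This is a detail you could recover, not a structural problem.

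The genuine gap is in the round-counting argument, exactly where you flag the ``delicate point.'' Your proposed potential---the total price-weighted mass of goods physically present in the economy---has the right \emph{magnitude} ($p_{max}\sum_{i,k}(e^i_k+M(R_i))$) but not the monotonicity an amortized argument needs: goods are never destroyed, resale inventory fluctuates, and prices only rise, so there is no reason this quantity decreases by $\Omega(\epsilon e_{min})$ per round, nor that ``reallocation events'' move it in a consistent direction. Your stated remedy (``charge the resale profit created during a round against the money it consumes'') is a hope rather than an argument, and it is precisely here that the paper's key idea lives: the correct potential is $\tau(t)$, the total money currently committed to goods assigned at their \emph{old} price $p^j_k/(1+\epsilon)$. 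This quantity is monotone decreasing between price raises for a structural reason---the algorithm never makes a new assignment at an old price, and \texttt{Outbid}/\texttt{Reschedule\_Resale} can only unassign old-price holdings---and each complete round that neither raises a price nor terminates must return at least $\tfrac{\epsilon}{1+\epsilon}e_{min}$ of surplus, all of which is drawn from and then re-spent against old-price assignments, so $\tau$ drops by that amount per round. Once $\tau=0$, every good is assigned at its current price, so either the market clears exactly or some good is over-demanded and a price is raised. Without identifying a quantity with this one-directional behavior, the bound $C\le\tfrac{(1+\epsilon)p_{max}}{\epsilon e_{min}}\sum_{i,k}(e^i_k+M(R_i))$ does not follow from your potential, and this is the step that distinguishes the resale setting from the conserved-money analyses of the pure-exchange auctions you cite.
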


\subsection{Overview of an Auction-Based Algorithm for Graphical Economies}
\label{subsec:algo_overview}
Our algorithm can be thought of as a series of localized auctions or negotiation processes, where agents simultaneously try to outbid one another for desired goods and utilize resale to leverage their position in the network (to maximize resale profits which they can use to place bids).
For a fixed accuracy parameter $\epsilon > 0$, playing out this process gradually increases prices monotonically by a multiplicative factor of $(1 + \epsilon)$ until, at termination, the algorithm has converged on a $\left(1 + \epsilon \right)$-approximate $\b$-resale equilibrium.
In this section we only sketch the algorithm; the complete pseudo-code is given in Appendix~\ref{append:algo} (Algorithm~\ref{alg:pseudo}).

At initialization the parameter $\epsilon > 0$ is appropriately chosen, all prices in $\p$ are set to unity, and all consumption and resale plans $\x, \y$ to zero.
In addition, all agents $i \in [m]$ are given an initial budget of $\p^i \cdot \e^i = \|\e^i\|_1$ since no bids have been placed and resale is not profitable yet, as prices are identical.
Later, when prices differ in the economy, each agent $i \in [m]$ has a budget given by the sum of $\p^i \cdot \e^i$ with the maximum profit $i$ can make from resale for prices in its neighborhood.

A key concept is that of good being \emph{assigned}.
All goods are initially unassigned.
As agents place bids on their neighbors' goods according to the current prices, their neighbors will try to assign goods to the bidder from endowments or using resale.
Crucially, a bid corresponds to an assignment of a good at the \emph{current} price.
As bids are placed and assignments made, an agent may assign their entire endowment and exhaust their resale capacity, at which point the agent realizes their good is over-demanded at the current price.
In response, the agent will \emph{raise} their price on over-demanded goods and will begin receiving bids at the higher price.
When the process resumes and a bid is placed at the higher price, the agent will renege on assignments at the old price and make a new assignment to the agent placing the bid at the higher price.
Bids are thus best understood as a ``security deposit'' indicating an agent's commitment to buy the good at the current price, while assignments are best understood as first-come-first-served agreements that are only honored until a better bid comes along.
Furthermore, since the price of a good is only raised when an agent has reneged on all assignments at the previous price, a price is only raised when all other options have been exhausted.

With this understanding, roughly speaking, each step in the algorithm is broken into three parts:
\begin{enumerate}
    \item[1.] Agents query their consumption oracle and then use any surplus budget to bid on goods in the consumption oracle's output plan that are not currently satisfied (e.g., due to neighboring agents previously being unable to match demand).
    \item[2.] Having received bids on goods, agents try matching bids by reneging on assignments made with their endowments at the old price.
    If after reneging on \emph{every} assignment made at the old price using its endowment the agent is still unable to match a bid, then they query their resale oracle.
    From the oracle's output, they adjust their resale plans and place bids accordingly using their resale budget.
    Finally, if they are still unable to match requests, then the agent raises prices on over-demanded goods and broadcasts the change to their neighbors.
    \item[3.] If a reneged assignment was for consumption, then the agent's bid is returned.
    Otherwise, the reneged assignment was for resale and a recursive process might ensue: the agent whose assignment was reneged has its resale bid returned and then must renege on the assignments dependant on this bid accordingly.
    Since goods can be resold over a long trade path in the underlying graph, this recursive reneging of assignments for resale may span the diameter of the underlying graph in the worst case.
\end{enumerate}
Throughout the course of the algorithm, consumption and resale plans are constantly in flux as assignments get made and reneged on, and budgets are carefully updated whenever prices are raised and bids get placed or returned---we defer the intricacies of this process to Appendix~\ref{append:algo}.
The algorithm only terminates when either the market locally clears exactly for every agent or every agent's budget is nearly spent.

\subsection{Overview of Algorithm's Analysis}
\label{subsec:complexity_correctness_overview}

In this section we give proof sketches for both the correctness and runtime of our algorithm (Theorem~\ref{thm:auction_time_complexity}), with full proofs in Appendix~\ref{subappend:correctness}~and~\ref{subappend:complexity}.
A crucial observation for both proofs is that, at any given point in the algorithm, a good is only ever assigned at one of two prices: its current price and its old price.
To see this, recall the process highlighted in~\S\ref{subsec:algo_overview} for how prices get raised: a good's price is raised when it is over-demanded, at which point the agent only receives bids at the new price and reneges on assignments made at the previous price as more bids arrive.
As was mentioned, agents only raise prices when no assignments exist at the old price and they conclude that their good is over-demanded at the current price.

\emph{\textbf{Correctness.}}
We begin with three observations that hold for the algorithm's entire duration: 
(i)~Prices are monotonically raised, so a good's current price is always higher than its old price.
(ii)~Each agent's budget is computed using the current prices in their neighborhood, i.e.~assuming they are selling their entire endowment and optimally reselling at the current (higher) prices.
(iii)~Agents bid for resale and consumption according to their respective oracle and the current budget. 

Since consumption and negated resale oracles satisfy WGS, from (i) we know that price raises can never increase plans suggested by an agent's oracles.
Therefore, from (iii), it follows that approximate individual rationality and approximately optimal arbitrage hold for every agent for the entirety of the algorithm.
The right inequality in approximate clearing follows from the fact that assignments are never made with goods an agent does not have on hand.
Similarly, the right inequality in approximately efficient budget use follows from (ii)~and~(iii) since budgets are always computed at the current prices and oracles always suggest plans within the budget.
The left inequality in both approximate clearing and approximately efficient budget use follow from the way termination conditions are specified for the algorithm.
When the algorithm terminates because all goods clear locally in the economy, these left inequalities hold trivially; for approximate clearing it is by definition, and for approximately efficient budget use because it implies no good is over-demanded and thus budgets are used fully.
For the other termination condition, when every agent's budget is nearly spent, showing that the left inequality holds requires some careful accounting of technical details about agents' surplus budgets (Appendix~\ref{subappend:correctness}).

\emph{\textbf{Run-Time.}}
To analyze the time complexity of the algorithm, we bound two quantities: (i)~the number of times prices can be raised, and (ii)~the number of steps between price raises before the algorithm must terminate.
Using $p_{max}$ to derive~(i) follows the same argument as many other auction-based algorithms~\cite{kapoor2005auction,garg2006auction}.
The argument notes that prices are never decreased, and prices are always increased by a factor of $(1 + \epsilon)$, and thus the number of price raises is bounded by $\ell \log_{1 + \epsilon} p_{max}$.
Our derivation of~(ii)~uses a method resembling an amortized analysis and differs considerably from previous work on auction-based algorithms.
We observe that goods are never assigned at an old price, and therefore the total amount of budgets being spent on bids at old prices can only decrease between price raises.
Thus, by using the total amount of budgets spent on goods at their old price as a potential, we derive a bound on the amount of time before this potential reaches zero.
We then argue that when the potential is zero, no goods are being assigned at their old price in the entire economy, thus either the market exactly clears or some good is over demanded and a price must get raised.
Putting these bounds together, and deriving the time complexity of each sub-routine in the algorithm, yields the time complexity in Theorem~\ref{thm:auction_time_complexity}.
Precise details of this analysis are given in Appendix~\ref{subappend:complexity}.

\section{Discussion}
\label{sec:disc}
We have introduced a variant of the Kakade, Kearns, and Ortiz (KKO)~\cite{kakade2004graphical} model of an exchange economy embedded on a graph, so that agents may propagate trade to distant parts of the graph by reselling goods.
As motivated by our broker examples~(\S\ref{subsec:broker} and Appendix~\ref{append:asymmetric_broker}), our model has several appealing properties.
We discuss these properties further here, along with other considerations, extensions, and future directions.

\subsection{Market Breadth}
\label{subsec:breadth}
As we saw in the broker example~(\S\ref{subsec:broker}), the addition of resale allows goods to move more than one hop in the network.
To underscore this point, consider a natural extension of the broker example, where agents having complementary endowments and preferences are connected by some arbitrarily long chain of brokers.
For the same reason as before, no equilibrium exists under the KKO model, but we would hope that with resale the market is able to move goods through the underlying network so as to match demand in maximally distant ``local markets''.
\begin{center}
\begin{tikzpicture}[scale=2.2,-]
  \foreach \i in {1,2,3}
  {
    \node[agent] (\i) at (\i,0) {$\i$};
    \coordinate (\i-c) at (\i,-.4);
  }
  \foreach \i in {1,2}
  {
    \node[agent] (m-\i) at (6.5-\i,0) {${\scriptstyle m-\i}$};
    \coordinate (m-\i-c) at (6.5-\i,-.4);
  }
  \node[agent] (m) at (6.5,0) {$m$};
  \coordinate (m-c) at (6.5,-.4);
  \node (dots) at (3.75,0) {$\cdots$};
  
  \node[txt] at (1-c) {$\e^1 = (1,0)$\\$\u^1 = (0,1)$};
  \node[txt] at (2-c) {$\e^2 = (0,0)$\\$\u^3 = (1,1)$};
  \node[txt] at (3-c) {$\e^3 = (0,0)$\\$\u^3 = (1,1)$};
  \node[txt,text width=2cm] at (m-2-c) {$\e^{m-2} = (0,0)$\\$\u^{m-2} = (1,1)$};
  \node[txt,text width=2.1cm] at (m-1-c) {$\e^{m-1} = (0,0)$\\$\u^{m-1} = (1,1)$};
  \node[txt] at (m-c) {$\e^m = (0,1)$\\$\u^m = (1,0)$};
  \path (1) edge (2) (2) edge (3);
  \path (m) edge (m-1) (m-1) edge (m-2);
  \path[dashed,-latex]
  (1) edge[bend left] (2)
  (2) edge[bend left] (3)
  (3) edge[bend left] (dots)
  (dots) edge[bend left] (m-2)
  (m-2) edge[bend left] (m-1)
  (m-1) edge[bend left] (m)
  (m) edge[bend left] (m-1)
  (m-1) edge[bend left] (m-2)
  (m-2) edge[bend left] (dots)
  (dots) edge[bend left] (3)
  (3) edge[bend left] (2)
  (2) edge[bend left] (1);
\end{tikzpicture}
\end{center}
As with the original broker example, it is simple to check that the conditions of Theorem~\ref{thm:existence} are satisfied when we set $\b=(b,\ldots,b)$ for any $b>0$.
Thus, there exists a $\b$-resale equilibrium.
We now argue that in any such equilibrium, some amount of good $1$ must travel all the way from agent $1$ to agent $m$, and good $2$ from agent $m$ to agent $1$.
As argued in \S\ref{sec:resale}, consider the case where agent $2$ does not resell good $2$ to agent $1$.
If agent $1$ had any budget, the market would not clear, as agent $1$ would request good $2$ from herself or from agent $2$.
Thus, we must have $p^1_1 = 0$, but then there is no optimal consumption plan, as even with zero budget agent $2$ would request an infinite amount of good $1$ from agent $1$.
By symmetry, the same applies to agents $m$ and $m-1$.
We conclude that, in any $\b$-resale equilibrium in this example, goods must traverse the entire length of the network.

\subsection{Algorithm Runtime and Bounding $p_{max}$}\label{subsec:non_poly_pmax}
The runtime given in Theorem~\ref{thm:auction_time_complexity} depends linearly on $p_{max}$, so for a polynomial runtime guarantee, one may hope to prove a general polynomial upper bound on $p_{max}$.
Unfortunately, the following example shows that  $p_{max}$ can be exponential in the number of agents.
Interestingly, the algorithm still terminates in polynomial time in this example, thus prompting the question of whether a different analysis could yield a general polynomial runtime bound.

Consider an economy where the underlying graph is a chain as in~\S\ref{subsec:breadth}.
There are $\ell=m$ goods, and each agent $i$ is endowed with one unit of good $i$ and nothing else, so that $e_i^i = 1$ and $e^i_j=0$ for all $j \neq i$.
Agents have utility $1$ for their own good and utility $\alpha > 1$ for their right neighbor's good: $u^i_i = 1$, $u^i_{i+1} = \alpha$, and $u^i_j = 0$ otherwise.
When running the algorithm from~\S\ref{sec:algo} with some constant $\epsilon > 0$ and agents being considered sequentially in order of $i$, the algorithm terminates in at most $m \log_{1+\epsilon}(\alpha)$ rounds despite $p_{max} = \Theta(\alpha^{m})$.
To see this, observe that in the first round each agent such that $i < m$ purchases the entire endowment from $i+1$ until finally agent $m$ raises the price on their endowment since $m-1$ purchased it.
After this first round agents $m$ and $m-1$ repeatedly try outbidding one another for $m$'s endowment until the price raises $\log_{1+\epsilon}(\alpha)$ times.
Once $\log_{1+\epsilon}(\alpha)$ price raises occur on $m$'s endowment, $m-1$ has maximal bang-per-buck on it's own endowment and will bid on it after each new price raise on $m$'s endowment.
By a symmetric argument, after each series of $\log_{1+\epsilon}(\alpha)$ price raises on the $i^{th}$ agent's endowment, the $(i - 1)^{th}$ agent begins bidding on its own endowment, e.g.~$m-2$ will begin bidding on its own endowment after $\log_{1+\epsilon}(\alpha)$ price raises on $m-1$'s endowment.
As there are $m$ agents and each agent consumes their own endowment at equilibrium, there are at least $m \log_{1+\epsilon}(\alpha)$ rounds before the algorithm terminates for a total runtime of $\mathcal{O}(m^6 \log_{1+\epsilon}(\alpha))$.
Although the algorithm terminates in polynomial-time, the final price for agent $m$'s good is $p^m_m = \Theta(\alpha^{m})$.

\subsection{Exact Equilibria}
\label{subsec:exact_equi}
In Section~\ref{sec:algo} we introduce an algorithm for finding \emph{approximate} $\b$-resale equilibria of graphical economies with resale.
In general the computation of equilibria is a computationally difficult problem (see e.g.~\cite{garg2017settling}).
\citet{garg2015markets} introduce an algorithm for computing exact equilibria in AD production economies with polyhedral production sets, given the number of goods in the economy as a constant.
Through a simple reduction of graphical economies with resale to a classic AD economy with production (as highlighted in the proof of Lemma~\ref{lem:quasi_equilibria}), we can see that the algorithm for computing exact equilibria in~\cite{garg2015markets} directly applies to our setting but that the algorithm is only polynomial time when the number of goods and agents are both constant (due to the ``tagging'' of goods).
That said, by simply replacing the global market clearing notion that is typically used in AD economies with our local notion of market clearing in all of their arguments, their algorithm for computing exact equilibria is recovered provided the number of goods is constant in the graphical economy.

\subsection{Commodity Bound Resale: Capacity vs.\ Credit}
\label{sec:resale-with-capacity}
Along with the credit bound resale we used in examples throughout this paper, another natural form is \emph{commodity bound resale}, where agents maximize resale profits subject to a hard constraint on the amount of goods being resold.
Formally, resale demand systems $R_i$ now return resale plans $\y^{i}$ such that setting $\hat\y^{i} = \y^{i}$ maximizes $\sum_{j \simeq i} (\p^i - \p^j) \cdot \hat\y^{ij}$ over all $\hat\y^{i}$ satisfying $\|\hat\y^{i}\|_1 \leq b_i$.
Commodity bounds on resale can be thought of as modeling limitations on storage or bandwidth available to agents for reselling goods.

Graphical economies with commodity bound resale have many appealing properties: they are intuitive, they have straightforward analytic solutions to the broker and market breadth (\S\ref{subsec:breadth}) examples, the algorithm in~\S\ref{sec:algo} is simple to derive, etc.
Unfortunately, since they do not satisfy Assumption~\ref{assumption:resale}(ii), graphical economies with commodity bound resale require stronger assumptions in order to prove the general existence of equilibria,%
\footnote{Since commodity bound resale allows agents to resell a finite amount of goods when prices are zero, the propagation argument used in~\S\ref{sec:existence} requires unintuitive assumptions in order to ensure existence. 
  For example, one can apply the argument to commodity bounds by appending a condition to Assumption~\ref{assumption:reachable} stating that for each good an agent is endowed with there is at least one adjacent neighbor with utility for the good.}
and fall squarely outside the scope of most existence results found in theoretical economics~(e.g.~\cite{arrow1954existence,mckenzie1981classical,mckenzie1961existencecorrections,mckenzie1959existence,maxfield1997general,walras1874elements,mascollelwhinstongreentextbook}).

\subsection{Relevance to the Economic Literature}
\label{sec:lit_networks}
In order to address concerns about the centralized nature of traditional economic models, several recent works have introduced models of \emph{decentralized} markets.  
These models often place markets on a multipartite graph, where edges represent two agents being capable of trade with one another, and the agents play a specific role in buyer-seller interactions possibly facilitated by intermediaries, e.g.~\cite{blume2007trading,gale2009trading,choi2017trading,han2018free,toomas2019price}.
\citet{kakade2004graphical} introduced a generalization of the AD exchange model to a graphical setting, enabling a robust set of interactions between agents on an arbitrary set of goods which is lacking in other economic models on networks.
Unfortunately, as we have argued, the model in \cite{kakade2004graphical} is ``too local'' and does not allow for any degree of intermediation between agents.

This paper generalizes the work in~\cite{kakade2004graphical} to allow for robust intermediation in graphical economies.
One could therefore consider our model as a unifying generalization of many models of trade in the literature of decentralized markets. 
Furthermore, by presenting two modes for equating graphical economies with AD economies,\footnote{\citet{kakade2004graphical} showed that AD exchange economies are equivalent to graphical economies with agents embedded on a complete graph.
In~\S\ref{sec:resale} we argued that AD exchange economies are equivalent to graphical economies with a certain class of Resale demand systems and sufficiently large resale capacities.
This gives two ways for graphical economies to interpolate between centralized and decentralized markets: (i)~by changing the density of the underlying graph, and (ii)~by changing resale capacities in the economy.} 
we equip the study of decentralized markets with a novel and mathematically rigorous means of understanding decentralized market models within the context of competitive AD equilibria.

\subsection{Future Directions}
\label{subsec:future}
The example in~\S\ref{subsec:non_poly_pmax} shows that $p_{max}$ may not be polynomially-bounded even when the algorithm terminates in polynomial time; characterizing instances for which the runtime is polynomially-bounded in the input size is an interesting open question. 
Another important question is to understand the relationship between the underlying graph structure and equilibrium outcomes.
For instance, the broker example in~\S\ref{subsec:broker} illustrates the fact that a node can extract rent solely from their position in the network---exploring a more precise connection between this advantage and graph-theoretic parameters such as degree centrality is an exciting direction for future work.
Similarly, studying market breadth and depth yields a series of natural questions about graphical economies.
Finally, since resale is a special case of production, it is of interest to find intuitive assumptions for guaranteeing existence in other forms of resale (e.g.,~commodity bound resale), and more generally, extend our model to a graphical production setting.

\begin{acks}
    We thank Nehal Kamat, Camden Elliot-Williams,
    and Jessie Finocchiaro for detailed comments and other contributions.
    We are indebted to Jugal Garg, Ruta Mehta, Simina Br\^anzei, and Michael Kearns for crucial references and discussions.
    This work was supported in part from the National Science Foundation under Grant No.~1657598.
\end{acks}

\bibliographystyle{ACM-Reference-Format}
\bibliography{allReferences}

\newpage
\appendix

\section{Full Proofs and Informally Stated Lemmas from~\S\ref{sec:existence}}
\label{append:exist}
\subsection{Informally Stated Lemmas from~\S\ref{subsec:connected_comps}}

\begin{lemma}\label{lem:prices_propagate}
  Suppose we have a graphical economy with resale satisfying Assumptions~\ref{assumption:resale}~and~\ref{assumption:participant}.
  Let $i \in [m]$ be a rational agent, and suppose there exists a trade path between $i$ and $j \in [m]$.
  For any $\b$-resale quasi-equilibrium $(\x,\y,\p)$, and any $k \in [\ell]$, either of the following conditions imply $p^{\hat j}_k > 0$ for all $\hat j \in P(i,j)$:
\begin{itemize}
    \item[(i)] $i$ is non-satiable in $k$, or
    \item[(ii)] $b_i > 0$ and $p^i_k > 0$.
\end{itemize}
\end{lemma}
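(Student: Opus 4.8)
The plan is to prove the statement by an induction that ``walks'' a non-zero price of good $k$ outward from $i$ along any trade path, using Assumption~\ref{assumption:resale}(ii) as the propagation engine and the non-satiability plus rationality of $i$ to seed the induction.

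First I would isolate the propagation step as a standalone claim: \emph{if an agent $a$ has $b_a > 0$ and $p^a_k > 0$, then every neighbor $a' \simeq a$ has $p^{a'}_k > 0$}. To see this, suppose some neighbor $a'$ had $p^{a'}_k = 0$. Since $p^a_k > 0$ we have $a' \neq a$, hence $a' \sim a$, and the pair $(a',k)$ witnesses $0 = p^{a'}_k < p^a_k$. By Assumption~\ref{assumption:resale}(ii) this forces $R_a(\p,b_a) = \emptyset$, contradicting the optimal-arbitrage condition $\y^a \in R_a(\p,b_a)$ that holds at every agent in a $\b$-resale quasi-equilibrium. This claim is precisely the mechanism by which a positive price cannot ``stop'' at a node that has reselling capacity.

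Next I would seed the induction at $i$, splitting on the two hypotheses. Under~(ii) the seed $p^i_k > 0$ is given outright. Under~(i), I use that $i$ is rational, so $\x^i \in C_i(\p,\beta_i)$ and this set is nonempty; since $i$ is non-satiable in $k$, a neighbor offering $k$ at price $0$ would make $i$'s utility unbounded within budget, so no utility-maximizing plan would exist (equivalently, such infinite demand could not clear locally). Hence every $j' \simeq i$ has $p^{j'}_k > 0$, which in particular gives both $p^i_k > 0$ and positivity at the first node after $i$ on the path. I would then carry out the induction along a fixed trade path $i = v_0, v_1, \ldots, v_n = j$: the base positivity $p^{v_0}_k > 0$ holds in both cases, and for $0 < t < n$ each $v_t$ is an interior node of a trade path and hence has $b_{v_t} > 0$, so the propagation claim applied to $a = v_t$ advances positivity from $v_t$ to $v_{t+1}$. (In case~(i) the first hop $v_0 \to v_1$ is already handled by the seed; in case~(ii) it follows from the propagation claim applied to $a = i$, using $b_i > 0$.) Since every $\hat j \in P(i,j)$ lies on some such trade path emanating from $i$, this yields $p^{\hat j}_k > 0$ for all $\hat j \in P(i,j)$.

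The main obstacle I anticipate is the seed step in case~(i): making rigorous that non-satiability together with rationality forbids a zero price \emph{anywhere} in $i$'s neighborhood, rather than merely at the cheapest neighbor. The cleanest route is to argue from nonemptiness of $C_i(\p,\beta_i)$, since a maximizer cannot exist if any neighbor prices a non-satiated good at zero, with Assumption~\ref{assumption:participant} standing in the background to ensure $i$ genuinely participates in the economy. The remainder is a bookkeeping induction whose only delicate point is that interior trade-path nodes always supply the $b_{v_t} > 0$ needed to invoke Assumption~\ref{assumption:resale}(ii), whereas the endpoints $i$ and $j$ need not have positive resale capacity.
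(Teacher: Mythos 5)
Your proposal is correct and follows essentially the same route as the paper's proof: seed positivity of $p^i_k$ (and of $i$'s whole neighborhood) from non-satiability plus rationality or from hypothesis~(ii), then push it along the trade path using Assumption~\ref{assumption:resale}(ii) together with the fact that interior trade-path nodes have positive resale bounds. The only difference is presentational --- you run a forward induction along the path where the paper argues by contradiction via the first edge on which the price drops to zero --- and the two are interchangeable.
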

\begin{proof}
  Let $i,j \in [m]$ be agents such that $i$ is rational and there exists a trade path between $i$ and $j$.
  If $i$ is non-satiated in $k$, then by definition $p^{j'}_k > 0$ for all $j' \simeq i$; in particular, $p^i_k > 0$.
  Similarly, if $b_i > 0$ and $p^i_k > 0$, then $p^{j'}_k > 0$ for all $j' \simeq i$, as otherwise Assumption~\ref{assumption:resale}(ii) contradicts optimal arbitrage.

  Arguing by contradiction, suppose $p^{\bar j}_k = 0$ for some $\bar j \in P(i,j)$.
  Let $\pi$ be a trade path between $i$ and $j$ containing $\bar j$, ordered from $i$ to $j$, and let $(i',j')$ be the first edge along $\pi$ such that $p^{i'}_k > 0$ and $p^{j'}_k = 0$.
  (Such an edge must exist as $p^i_k > 0$ and $p^{\bar j}_k = 0$.)
  Note that $i'\neq i$ as otherwise $j'\simeq i$ but $p^{j'}_k = 0$, contradicting the above; similarly, $i'\neq j$ as then $j'=j$ as well and $p^{j}_k$ cannot be both positive and zero.
  Thus, $b_{i'} > 0$, as $i'$ is on a trade path between $i$ and $j$ but $i'\notin\{i,j\}$, and by Assumption~\ref{assumption:resale}(ii) we therefore contradict optimal arbitrage.
\end{proof}

\begin{lemma}\label{lem:exists_rational_agent}
If a connected graphical economy with resale satisfies Assumptions~\ref{assumption:utilities},~\ref{assumption:resale},~\ref{assumption:participant},~\ref{assumption:reachable},~and~\ref{assumption:useful_goods}, then for any $\b$-resale quasi-equilibrium there exists some rational agent $i \in [m]$ such that $\p^i \cdot \e^i > 0$.
\end{lemma}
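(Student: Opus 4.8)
The plan is to argue by contradiction, reducing the statement to the purely existential claim that some agent values its own endowment positively. First I would observe that a positive endowment value already forces rationality: in any $\b$-resale quasi-equilibrium optimal arbitrage holds for every agent, and since the all-zero plan is feasible the resale profit $\sum_{j\simeq i}(\p^i-\p^j)\cdot\y^{ij}$ is nonnegative, so $\p^i\cdot\e^i>0$ yields wealth $\beta_i>0$ and hence, by condition~1 of Definition~\ref{def:quasi_equi}, individual rationality. Thus it suffices to exhibit a single agent $i$ with $\p^i\cdot\e^i>0$; such an agent is automatically the rational, endowed agent we seek.

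Suppose instead that $\p^i\cdot\e^i=0$ for every $i$. Since prices are globally normalized, $\sum_{i,k}p^i_k=1$, so there is a pair $(a,k^*)$ with $p^a_{k^*}>0$. Because each summand $p^a_k e^a_k$ vanishes, $e^a_{k^*}=0$, so the condition $\e^a>\0$ fails, and Assumption~\ref{assumption:participant} forces $b_a>0$. This is the key leverage point: I now have an agent with both positive credit and a positive local price, which is exactly the hypothesis needed to drive price propagation.

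Next I would invoke the mechanism behind Lemma~\ref{lem:prices_propagate}(ii), observing that its proof relies only on optimal arbitrage together with Assumption~\ref{assumption:resale}(ii), and so applies to \emph{any} agent of a quasi-equilibrium rather than only to a rational one. Consequently, for every $j$ reachable from $a$ along a trade path, $p^{\hat j}_{k^*}>0$ for all $\hat j\in P(a,j)$. Let $T$ be the set of agents trade-path-reachable from $a$. Every $j\in T$ then satisfies $p^j_{k^*}>0$, whence $e^j_{k^*}=0$, the condition $\e^j>\0$ fails, and $b_j>0$ by Assumption~\ref{assumption:participant}. The crucial consequence is that $T$ is closed under taking neighbors: given $u\in T$ and $w\simeq u$, appending the edge from $u$ to $w$ onto a trade path from $a$ to $u$ again yields a trade path, since the only new interior node is $u$ and $b_u>0$. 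As the economy is connected and $a\in T$, this forces $T=[m]$.

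Finally I would close the argument using Assumption~\ref{assumption:useful_goods}: good $k^*$ is endowed at some agent $s$, so $e^s_{k^*}>0$ and hence $p^s_{k^*}=0$ under our hypothesis; but $s\in[m]=T$ gives $p^s_{k^*}>0$, a contradiction. This refutes the hypothesis, so some agent has $\p^i\cdot\e^i>0$, and by the reduction above that agent is rational. The step I expect to require the most care is the propagation argument: one must confirm that it genuinely needs only optimal arbitrage---a requirement of \emph{all} agents in a quasi-equilibrium, including the quasi-rational ones---rather than full rationality, since under the contradiction hypothesis there need not be any rational agent to seed the statement of Lemma~\ref{lem:prices_propagate} as written. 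The closure-under-neighbors observation, which upgrades local propagation into the global conclusion $T=[m]$ via connectivity, is the other point that warrants a careful write-up.
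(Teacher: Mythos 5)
Your proof is correct, but it takes a genuinely different route from the paper's. The paper argues directly: it picks an agent $i$ with $\p^i\neq\0$ (by normalization), and if $\p^i\cdot\e^i=0$ it uses non-satiability (Assumption~\ref{assumption:utilities}(ii)) together with Assumption~\ref{assumption:reachable}(i) to find a trade path from $i$ to some endowed agent $\hat j$, then walks through the supply graph $G^S$ via Assumption~\ref{assumption:reachable}(iii) to an agent endowed with the specific good $k$ having $p^i_k>0$ (such an agent exists by Assumption~\ref{assumption:useful_goods}), stitching the intermediate trade paths together and invoking Lemma~\ref{lem:prices_propagate} to propagate $p_k>0$ to that endowed agent, which is then the desired witness. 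You instead argue by contradiction: if every agent has $\p^i\cdot\e^i=0$, then every agent carrying a positive price must have $b_i>0$ by Assumption~\ref{assumption:participant}, so the set of trade-path-reachable agents is closed under taking neighbors, connectivity forces the positive price of $k^*$ onto the whole component, and the agent endowed with $k^*$ (Assumption~\ref{assumption:useful_goods}) yields the contradiction. Your route is shorter and bypasses the supply-graph machinery entirely---it never touches Assumptions~\ref{assumption:utilities}(ii), \ref{assumption:reachable}(i), or \ref{assumption:reachable}(iii)---so it in fact establishes the lemma under weaker hypotheses; the paper's direct construction, by contrast, produces an explicit endowed witness and rehearses the $G^S$-stitching argument that is reused in Lemma~\ref{lem:all_rational_agents}. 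The one point you rightly flag, that case~(ii) of Lemma~\ref{lem:prices_propagate} relies only on optimal arbitrage (which Definition~\ref{def:quasi_equi} imposes on \emph{all} agents) and not on rationality of the source, does check out against the proof of that lemma; the paper itself implicitly leans on the same observation, since the agent $i$ to which it applies Lemma~\ref{lem:prices_propagate} is not yet known to be rational at that point.
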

\begin{proof}
Let $\x$, $\y$, and $\p$ be a $\b$-resale quasi-equilibrium.
By price normalization there exists at least one agent that has a good with non-zero price, i.e.~there exists $i \in [m]$ such that $\p^i \neq \0$.
Consider $i$'s wealth $\beta_i = \p^i \cdot \e^i + \sum_{j \simeq i} (\p^i - \p^j) \cdot \y^{ij}$.
Note that, by definition, $0 \leq \sum_{j \simeq i} (\p^i - \p^j) \cdot \y^{ij}$.
It follows that if $\p^i \cdot \e^i > 0$, then we know $i$ has positive wealth and so $i$ is rational.
On the other hand, if $\p^i \cdot \e^i = 0$, then by Assumption~\ref{assumption:participant} we know that $b_i > 0$.

Let $k \in [\ell]$ be a good such that $p^i_k > 0$.
We now show that there exists a trade path between $i$ and some agent $j \in [m]$ with $e^j_k > 0$.
Note that some $j\in[m]$ with $e^j_k > 0$ exists by Assumption~\ref{assumption:useful_goods}.
By non-satiability (Assumption~\ref{assumption:utilities}(ii)), agent $i$ is non-satiable in some good $\hat k \in [\ell]$.
By Assumption~\ref{assumption:reachable}(i) for $i$ and $\hat k$, there exists a trade path between $i$ and some $\hat j \in [m]$ with $e^{\hat j}_{\hat k} > 0$.
Furthermore, as $\e^{\hat j} \neq \0$, by Assumption~\ref{assumption:reachable}(iii) there is a directed path $j_1 = \hat j, j_2, \ldots, j_H = j$ in $G^S$ from $\hat j$ to $j$.
For each $h\in[H]$, by Assumption~\ref{assumption:participant}, either $e^{j_h}_k > 0$ or $b_{j_h} > 0$.
Let $h^*\in[H]$ be the first index such that $e^{j_{h^*}}_k > 0$, which must exist as $j = j_H$.
We now have a trade path between $i$ and $\hat j = j_1$, and $j_h$ and $j_{h+1}$ for all $1 \leq h < h^*$.
As $b_{j_h} > 0$ for all $h < h^*$, stitching these trade paths together gives a trade path $\pi$ between $i$ and $j_{h^*}$.
(Observe that if $H=1$, many of the above statements are vacuously true, and the statement follows as $j_{h^*}=j=\hat j$.)
We have established $j_{h^*} \in P(i,j_{h^*})$, and Lemma~\ref{lem:prices_propagate} now gives $p^{j_{h^*}}_k > 0$.
Therefore $\e^{j_{h^*}} \cdot \p^{j_{h^*}} > 0$, so $j_{h^*}$ is rational.
\end{proof}

\subsection{Proof of Lemma~\ref{lem:quasi_equilibria} (Existence of \texorpdfstring{$\b$}{b}-Resale Quasi-Equilibria)}
\begin{proof}
    Our proof is straightforward--we satisfy all the preconditions for quasi-equilibrium existence highlighted in~\cite{maxfield1997general}.
    Notice that a graphical economy with resale is an AD production economy with $m \ell$ ``tagged goods'' and a ``production firm'' per agent who can resell.
    Let $(i,k)$ for $i \in [m]$ and $k \in [\ell]$ represent the corresponding ``tagged good'' in the AD setting.
    Then the consumption plans in the AD setting only allow consumer $i$ to consume goods $(j,k)$ for $j \simeq i$.
    Similarly, each $i \in [m]$ such that $b_i > 0$ has sole ownership of a firm whose resale plans only allow input of goods $(j,k)$ for $j \sim i$ and, for each input $(j,k)$, outputs an equal amount of $(i,k)$ where the allowable consumption set is defined by $i$'s resale bounds.
    In this AD setting and with our assumptions it is clear that: the consumption sets are closed convex sets bounded from below, the production sets are closed convex sets containing $\0$ that do not produce goods ``from nothing'', the utility functions still satisfy Assumption~\ref{assumption:utilities}, and the economy satisfies ``free disposal''.
    As stated in~\citet{maxfield1997general}~and~\citet{mckenzie1981classical}, these conditions are sufficient to imply the existence of a quasi-equilibrium in an AD production economy.
\end{proof}

\subsection{Proof of Lemma~\ref{lem:all_rational_agents} (Propagation of Rationality in Connected Components)}
\begin{proof}
Our proof proceeds in two stages.
First we show that if there exists a rational endowed agent, then that rationality will ``propagate'' to all other endowed agents.
We conclude by showing that this implies every non-endowed agent in the economy has a strictly positive local price vector, which forces all non-endowed agents to behave rationally at a $\b$-resale quasi-equilibrium too.

Let $\x$, $\y$, and $\p$ be a $\b$-resale quasi-equilibrium as guaranteed by Lemma~\ref{lem:quasi_equilibria} and Assumptions~\ref{assumption:utilities}~and~\ref{assumption:resale}.
Suppose $i$ is rational and non-satiated in $k$.
For any edge $(j,i)$ in $G^S_k$, we have $e^j_k > 0$ and there exists a trade path between $i$ and $j$.
By Lemma~\ref{lem:prices_propagate}, we have $p^{\hat j}_k > 0$ for all $\hat j \in P(i,j)$, and as $j\in P(i,j)$, we have $\p^j \cdot \e^j > 0$ and $j$ is rational.
By definition of $G^S$, if any agent $i$ is rational, then all agents with a directed path to $i$ in $G^S$ are therefore rational, by repeating this argument for the respective goods on each edge of this path.
From Lemma~\ref{lem:exists_rational_agent}, there exists a rational agent $i \in [m]$ with $\e^i \neq \0$.
By Assumption~\ref{assumption:reachable}(iii), $G^S$ is strongly connected among endowed agents, so all agents $j$ with $\e^j\neq \0$ have directed paths in $G^S$ to $i$, and are therefore rational.

It remains to establish rationality of unendowed agents.
By Assumption~\ref{assumption:reachable}(ii), if $\e^j = \0$, then for all $k\in[\ell]$ there exists an edge $(\hat i,\hat j)$ in $G^S_k$ such that $\e^{\hat j} \neq \0$ and $j \in P(\hat i,\hat j)$.
By definition of $G^S_k$, we also have $\e^{\hat i} \neq \0$.
By the above use of Lemma~\ref{lem:prices_propagate}, as $j \in P(\hat i, \hat j)$, we must have $p^j_k > 0$.
Therefore, for all unendowed $j$ and goods $k$, $p^j_k > 0$.
Letting $\beta_j = \p^i \cdot \e^i + \sum_{j \simeq i} (\p^i-\p^j) \cdot \y^{ij}$ be the budget for agent $j$, we conclude that if $\e^j = \0$, either $\beta_j = 0$, in which case $C_j(\p,\beta_j) = \{\0\}$ by Assumption~\ref{assumption:utilities} and therefore $j$ rationally consumes nothing, or $\beta_j > 0$, in which case $j$ is also rational.
\end{proof}

\section{Detailed Auction-Based Algorithm and its Analysis}
\label{append:algo}
A sketch of the auction-based algorithm for computing approximate resale equilibria was given in~\S\ref{sec:algo}, but several technical details were omitted for the sake of clarity and exposition--we will now expand on all of these details.
A detailed description of the algorithm is given in Appendix~\ref{subappend:auction_algo}, with complete pseudo-code given in Algorithm~\ref{alg:pseudo}.
Appendix~\ref{subappend:correctness}~and~\ref{subappend:complexity} contain full proofs of correctness and runtime of the algorithm respectively, together these formally prove Theorem~\ref{thm:auction_time_complexity} as was sketched in~\S\ref{subsec:complexity_correctness_overview}.

Throughout Algorithm~\ref{alg:pseudo} we will use $\mathscr{O}_{C_i}(\hat \p,\hat \beta_i, \x^i)$ as shorthand meaning ``agent $i$ consults their consumption oracle for $C_i$ with the input $(\p,\hat \p,\beta_i, \hat \beta_i, \x^i)$'' where $\p$ and $\beta_i$ are the prices and budgets used during the previous call to the oracle.
Similarly, we use $\mathscr{O}_{R_j}(\hat \p, b_j, \y^j, \textbf{r})$ as shorthand for ``agent $j$ consults their resale oracle for $R_j$ with the input $(\p,\hat \p, b_j, b_j, \y^j)$ and the request vector $\textbf{r}$'' where $\textbf{r}$ is the ``request'' vector discussed in section~\S\ref{subsec:prelim_and_technical} and $\p$ is the set of prices used during the previous call to the oracle.
Since prices are never decreased in Algorithm~\ref{alg:pseudo}, maintaining prices and budgets from previous oracle calls involves straightforward bookkeeping.

\subsection{Auction-Based Algorithm for Computing Approximate Equilibria}
\label{subappend:auction_algo}
At initialization of the algorithm, a small accuracy parameter $\epsilon > 0$ is appropriately chosen, all prices in $\p$ are set to unity, all consumption plans $\x$ to zero, and all resale plans $\y$ to zero.
Prices will only ever increase throughout the algorithm and these increases are in multiplicative increments of $(1+\epsilon)$.
At the initial prices no agent can profit from resale, so only endowed agents have surplus budgets at the start of the algorithm. 
As mentioned in~\S\ref{subsec:algo_overview}, a key concept in our discussion of the algorithm is that of goods being \emph{assigned} at a price.
Assignment simply refers to an agent ``commiting'' a portion of a good they have available (from their endowment or purchased in order to resell) to an agent who has placed a bid on the good.
Note that since both consumption and resale begin at zero, we say that all goods in the economy are initially unassigned.
As will be discussed below, the algorithm assigns goods to the highest bidder and goods are only ever unassigned at initialization or when an agent gets outbid.

After initialization Algorithm~\ref{alg:pseudo} proceeds by iterating through the main loop until one of two conditions are met: the market locally clears exactly for every agent, or every agent's budget is nearly spent.
The amount of money in an agent's budget that is not being used (i.e.~not used in bids) is called that agent's \emph{surplus budget}.
Within the main loop of Algorithm \ref{alg:pseudo}, agents are considered one by one. 
If some agent $i$ is considered without surplus budget, then the algorithm simply moves onto the next agent.
If some agent $i$ is considered and does have positive surplus budget, then they consult their demand oracle and use their surplus budget to place bids according to the oracle's output consumption plan.
Suppose agent $i$ has placed a bid on agent $j$'s good $k$ according to the demand oracle's output plan.
Agent $j$ will try to meet the demand of $i$'s bid via calls to the \texttt{Assign}, \texttt{Outbid}, and \texttt{Reschedule\_Resale} procedures in that order.

In the \texttt{Assign} procedure agent $j$ checks if it has some unassigned amount of good $k$ and, if so, simply gives agent $i$ an appropriate amount (either as much as $j$ has or as much as $i$'s bid can pay for) of agent $j$'s good $k$ at the current price $p^j_k$.
If agent $j$ was unable to meet agent $i$'s demand for good $k$ after \texttt{Assign} is called, then all of $j$'s good $k$ is already assigned to other agents at either the current price~($p^j_k$) or the old price~($\frac{p^j_k}{(1+\epsilon)}$) and the \texttt{Outbid} procedure is called.
In the \texttt{Outbid} procedure agent $j$ ``takes back'' good $k$ from agents who were assigned $k$ at the old price $\frac{p^j_k}{(1+\epsilon)}$, returns the money those agents had spent on bids for $k$ at the old price, and reassigns the newly available quantities of $k$ to agent $i$ at the new price $p^j_k$.
Finally, if agent $j$ was unable to meet agent $i$'s demand for good $k$ after \texttt{Outbid} is called, then all of $j$'s good $k$ is assigned at the current price $p^j_k$ and $j$ will need to consider resale as a means of meeting agent $i$'s demand for $k$; this is when the \texttt{Reschedule\_Resale} procedure is called.

In the \texttt{Reschedule\_Resale} procedure agent $j$ consults their resale oracle and, if the oracle has deemed it profitable, uses the oracle's output to try and acquire more of good $k$ for agent $i$ by using credit. 
Though perhaps mysterious at first glance, calls to \texttt{Reschedule\_Resale} essentially boil down to agent $j$ comparing the resale oracle's previous output with the current output and, if there is a profitable way to (re)allocate the credit being used, agent $j$ tries procuring more of good $k$ by using credit to place bids via internal calls to the \texttt{Assign}, \texttt{Outbid}, and \texttt{Reschedule\_Resale} procedures.
It is important that, by definition, $j$'s resale oracle will never reduce resale profits.
If after calling \texttt{Reschedule\_Resale} agent $j$ is still unable to satisfy agent $i$'s bids for good $k$, then $j$ knows it has priced $k$ too low and will raise $p^j_k$ by a fixed multiplicative factor of $\left(1 + \epsilon \right)$.
At this point agent $j$ updates its budget and the current iteration of the main loop is finished.

As alluded to above, at any given point in the algorithm goods can only be assigned at either the current price $p^j_k$ or the previous price $\frac{p^j_k}{\left( 1 + \epsilon \right)}$.
This fact follows immediately from prices only being raised when a good is fully assigned at the current price and still over demanded (i.e.~after a call to \texttt{Reschedule\_Resale}).
Furthermore, notice that the \texttt{Reschedule\_Resale} procedure tries to procure goods by essentially mirroring the process in the main loop (which is key to our analysis in Appendix~\ref{subappend:complexity}).

\afterpage{\clearpage}
\begin{algorithm}[h]
  \DontPrintSemicolon
    \caption{Auction Based Approximation Algorithm (Part 1)}
    \label{alg:pseudo}

    \SetKwFunction{Main}{Main}
    
    \SetKwFunction{Assign}{Assign}
    \SetKwFunction{Outbid}{Outbid}
    \SetKwFunction{RR}{Reschedule\_Resale}
    \SetKwFunction{RaisePrice}{Raise\_Price}
    \SetKwFunction{UpdateResale}{Update\_Resale}
    \SetKwFunction{UpdateBudget}{Update\_Budgets}
    \SetKwFunction{Fxn}{Function}
    
    \SetKwProg{Alg}{algorithm}{:}{}
    \SetKwProg{proc}{procedure}{:}{}
    
    \SetKwIF{If}{ElseIf}{Else}{if}{}{else if}{else}{end if}
    
    \Alg{\Main}{
        initialization: $\forall i,j\in[m]$:\,\, $\x^{i}=\y^{i}=\0$,\,\, $\p^i = \1$,\,\, $\beta_i = s_i = \p^i \cdot \e^i$ \;
        
        \While{$\sum\limits_{i \in [m]}\!s_i > \frac{\epsilon}{1+\epsilon} \zeta \;\,\text{ or }\,\; \exists i \in [m], k \in [\ell]:\!\sum\limits_{j \in [m]}\!x^{ji}_k + y^{ji}_k  \neq e^{i}_k +\!\sum\limits_{j \in [m]} y^{ij}_k\;$}{
            pick $i \in [m]$ such that $s_i > 0$ \;
            let $\hat \x^i = \mathscr{O}_{C_i}(\p,\beta_i,\x^i)$ \;
            update $\textbf{r} = \hat\x^{i} - \x^{i}$ \tcp*{compute consumption request vectors}
            \textbf{if} \Assign{$i,\x^{i}, \textbf{r}$} \tcp*{neighbors have unassigned requested good(s)}
            \textbf{else if} \Outbid{$i,\x^{i}, \textbf{r}$} \tcp*{neighbors assigned requested good(s) at lower price}
            \textbf{else} \RR{$i,\x^{i}, \textbf{r}$} \tcp*{neighbors try to get more requested good(s)}
        }
    }
    
    \;
    \proc{\Assign{$i,\z,\textbf{r}$}}{
        let $\textbf{a}^{total} = \0$ \tcp*{tracks good(s) assigned, vector of size $\ell$}
        \For{$j \simeq i$ and $k \in [\ell]$ such that $r^{j}_k > 0$}{
            let $a$ be the minimum of: (i) the amount $j$ has unassigned of $k$ and (ii) $r^{j}_k$ \;
            update $z^{j}_k = z^{j}_k + a$ at price $p^j_k$ \;
            \textbf{if} $\z$ is a consumption plan \textbf{then} update $s_i = s_i - a p^j_k$ \;
            update $a^{total}_k = a^{total}_k + a$ \;
        }
        \textbf{return} $\textbf{a}^{total}$ \;
    }
    
    \;
    \proc{\Outbid{$i,\z,\textbf{r}$}}{
        let $\textbf{a}^{total} = \0$ \tcp*{tracks good(s) assigned, vector of size $\ell$}
        \While{$r^{j}_k > 0$ and $j \simeq i$ has neighbor $\hat{j} \simeq j$ with $k \in [\ell]$ assigned at a lower price}{
            let $c$ be the amount $\hat{j}$ bought of $k$ at a lower price for consumption\;
            let $d$ be the amount $\hat{j}$ bought of $k$ at a lower price for resale\;
            let $a = \min(c+d,r^{j}_k)$\;
            
            update $z^{j}_k = z^{j}_k + a$ at price $p^j_k$ \;
            \textbf{if} $\z$ is a consumption plan \textbf{then} update $s_i = s_i - a p^j_k$ \;
            
            unassign $a$ units of $k$ from $\hat j$, from consumption and then resale \;
            update $s_{\hat j} = s_{\hat j} + \min(a,c) p^j_k / (1+\epsilon)$ \;
            \textbf{if} $a-c>0$ \textbf{then} \UpdateResale{$\hat{j},k,a-c$} \;
            
            update $a^{total}_k = a^{total}_k + a$ \;
        }
        \textbf{return} $\textbf{a}^{total}$ \;
    }
\end{algorithm}

\afterpage{\clearpage}
\setcounter{algocf}{0}
\begin{algorithm}[h]
  \DontPrintSemicolon
    \caption{Auction Based Approximation Algorithm (Part 2)}

    \SetKwFunction{Main}{Main}
    
    \SetKwFunction{Assign}{Assign}
    \SetKwFunction{Outbid}{Outbid}
    \SetKwFunction{RR}{Reschedule\_Resale}
    \SetKwFunction{RaisePrice}{Raise\_Price}
    \SetKwFunction{UpdateResale}{Update\_Resale}
    \SetKwFunction{UpdateBudget}{Update\_Budgets}
    \SetKwFunction{Fxn}{Function}
    
    \SetKwProg{Alg}{algorithm}{:}{}
    \SetKwProg{proc}{procedure}{:}{}
    
    \SetKwIF{If}{ElseIf}{Else}{if}{}{else if}{else}{end if}

    \proc{\RR{$i,\z,\textbf{r}$}}{
        let $\textbf{a}^{total} = \0$ \tcp*{tracks good(s) assigned, vector of size $\ell$}
        \For{$j \simeq i$ such that $\exists r^{j}_k > 0$ for some $k \in [\ell]$}{
            let $\hat \y^j = \mathscr{O}_{R_j}(\p,b_j,\y^j,\textbf{r}^{j})$ \;
            update $\hat{\textbf{r}} = \hat \y^j - \y^j$ \tcp*{compute resale request vectors}
            
            let $\textbf{a} = \min\left(\sum_{\hat j \sim j} \hat{\textbf{r}}^{\hat j}, \textbf{r}^{j}\right)$ \tcp*{$\min$ is taken elementwise}
            initialize $\textbf{q} = \0$ \tcp*{book keeping vector}
            
            \For(\tcp*[h]{In order}){\Fxn $\in$ \{\Assign, \Outbid, \RR\} and \textbf{if} $\textbf{q} < \textbf{a}$}{
                $\hat{\textbf{q}} = $  \Fxn{$j,\y^{j},\textbf{a}$} \tcp*{Make assignment, track amount}
                update $\textbf{q} = \textbf{q} + \hat{\textbf{q}}$ \;
            }
            
            \For{$\hat j \simeq j$ and $k \in [\ell]$ such that $\hat{r}_{k}^{\hat j} < 0$}{
                let $c$ and $d$ be the amounts $\hat{j}$ bought of $k$ for consumption and resale respectively \;
                unassign $\hat{r}_{k}^{\hat j}$ units of $k$ from $\hat j$, from consumption and then resale \;
                \textbf{if} $\hat{r}_{k}^{\hat j} - c > 0$ \textbf{then} \UpdateResale{$\hat{j},k,\hat{r}_{k}^{\hat j}-c$}
            }
            
            update $\z^{j} = \z^{j} + \textbf{q}$ at prices $\p^j$ \;
            
            \textbf{if} $\z$ is a consumption plan \textbf{then} update $s_i = s_i - \textbf{q} \cdot \p^j$ \;
            
            update $\textbf{a}_{total} = \textbf{a}_{total} + \textbf{q}$ \;
            
            \lFor{$k \in [\ell]$ \textbf{\emph{if}} $\exists a_k - q_k > 0$ or $a_k < r^{j}_k$}{
                \RaisePrice{j,k}
            }
        }

        \textbf{return} $\textbf{a}_{total}$ \;
    }
    
    \;
    \proc{\UpdateResale{$j,k,q$}}{
        \For{$i \in [m]$ such that $x^{ij}_k>0$ and \textbf{\emph{while}} $q > 0$}{
            let $a = \min(q,x^{ij}_k)$ \;
            update $x^{ij}_k = x^{ij}_k - a$\;
            update $s_i = s_i + a p^j_k / (1 + \epsilon)$ \;
            update $q = q - a$\;
        }
        \For{$i \in [m]$ such that $y^{ij}_k>0$ and \textbf{\emph{while}} $q > 0$}{
            let $a = \min(q,y^{ij}_k)$ \;
            update $y^{ij}_k = y^{ij}_k - a$\;
            \UpdateResale{$i,k,a$} \;
            update $q = q - a$\;
        }
    }
    
    \;
    \proc{\RaisePrice{j,k}}{
        update $s_j = s_j + \epsilon p^j_k e^j_k$ \;
        $p^j_k = p^j_k \left(1 + \epsilon \right)$ \;
        
        \For{$\hat j \simeq j$ \textbf{if} $b_{\hat j} > 0$}{
            let $\hat \y^{\hat j} = \mathscr{O}_{R_{\hat j}}(\p,b_{\hat j},\y^{\hat j},\1)$  \tcp*{consult oracle for optimal profits, arbitrary request}
            update $s_{\hat j}$ so that resale profits are $\sum_{\bar j \simeq \hat j} (\p^{\hat j} - \p^{\bar j}) \cdot \hat\y^{\hat j \bar j}$ \;
            \lWhile{$s_{\hat{j}} < 0$}{unassign goods from $\x^{\hat{j}}$}
        }
        
    }
    
\end{algorithm}

\afterpage{\clearpage}

\subsection{Correctness}
\label{subappend:correctness}
Let $co^{ij}_k$ be the amount agent $i \in [m]$ is assigned for consumption from $j \in [m]$ of $k \in [\ell]$ at the old price $\frac{p^j_k}{(1+\epsilon)}$ and $cn^{ij}_k$ be the amount agent $i$ is assigned for consumption from $j$ of $k$ at the new price $p^j_k$.
Similarly, let $ro^{ij}_k$ be the amount agent $i$ is assigned for resale from $j$ of $k$ at the old price $\frac{p^j_k}{(1+\epsilon)}$ and $rn^{ij}_k$ be the amount agent $i$ is assigned for resale from $j$ of $k$ at the new price $p^j_k$.
Then $x^{ij}_k = co^{ij}_k + cn^{ij}_k$ and $y^{ij}_k = ro^{ij}_k + rn^{ij}_k$.
Define the surplus budget with agent $i$ as
 \begin{equation*}
    s_i = \sum_{k \in [\ell]} p^i_k e^i_k + \sum_{j \simeq i}  \sum_{k \in [\ell]} p^i_k \left(ro^{ij}_k + rn^{ij}_k\right) - \sum_{j \simeq i} \sum_{k \in [\ell]}  p^j_k \left(\frac{ro^{ij}_k}{1 + \epsilon} + rn^{ij}_k\right) - \sum_{j \simeq i}  \sum_{k \in [\ell]} p^j_k \left(\frac{co^{ij}_k}{1 + \epsilon} + cn^{ij}_k\right)
\end{equation*}
and the total surplus budget in the economy as $s = \sum_{i \in [m]} s_i$.
Finally, let $e_{min} = \min_{i,k}\{e^i_k : e^i_k > 0\}, b_{min} = \min_{i}\{b^i : b_i > 0\},$ and $\zeta = \min\left(e_{min},\frac{\epsilon}{1+\epsilon} b_{min} \right)$.
Note that in the definition of $s_i$, agent~$i$ is never reselling good $k$ at their old price for the good $\frac{p^i_k}{(1 + \epsilon)}$.  
This follows from the fact that \emph{Raise\_Rrice} can only be called after an agent has rescheduled as much resale as they can.

We will show in Lemma~\ref{lem:invariants} that the following hold throughout a run of the algorithm.
As in Definition~\ref{def:approx-eq}, here $\bar \p$ satisfies $\bar \p^i = \p^i$ and $\bar \p^j = \p^j/(1+\epsilon)$ for every $j \in [m] \setminus \{i\}$, and we define budgets $\beta_i = \p^i \cdot \e^i + \sum_{j \simeq i} (\p^i - \p^j) \cdot \y^{ij}$ and $\bar \beta_i = \bar\p^i \cdot \e^i + \sum_{j \simeq i} (\bar\p^i - \bar\p^j) \cdot \y^{ij}$.

\begin{invariant}
  \label{i1} $\y^i \leq \bar\y^i$ for some $\bar\y^i \in R_i\left(\bar \p,b_i\right)$.
\end{invariant}
\begin{invariant}
  \label{i2} $\x^i \leq \bar\x^{i}$ for some $\bar\x^i \in C_i\left(\frac{\p}{1+\epsilon}, \bar\beta_i\right)$.
\end{invariant}
\begin{invariant}
  \label{i3} $\sum_{j\simeq i} x^{ji}_k + \sum_{j\simeq i} y^{ji}_k \leq e^i_k + \sum_{j \simeq i} y^{ij}_k \text{ for all } k \in [\ell]$.
\end{invariant}
\begin{invariant}
  \label{i4} $\text{If } e^i_k = 0 \text{, then } \sum_{j \simeq i} y^{ij}_k = \sum_{j \simeq i} \left(y^{ji}_k + x^{ji}_k \right)$.
\end{invariant}
\begin{invariant}
  \label{i5} $\sum_{j \simeq i} \p^j \cdot \x^{ij} \leq \beta_i \text{ for all } i \in [m]$.
\end{invariant}

\begin{lemma}\label{lem:invariants}
  At every iteration of the main loop in Algorithm~\ref{alg:pseudo}, for every agent $i \in [m]$, neighbor $j \simeq i$, and good $k \in [\ell]$, Invariants~\ref{i1} through~\ref{i5} hold.
\end{lemma}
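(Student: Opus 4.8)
The plan is to prove the lemma by induction on the iterations of the main loop, establishing all five invariants \emph{simultaneously}. For the base case, at initialization we have $\x^i=\y^i=\0$, $\p^i=\1$, and $\beta_i=s_i=\p^i\cdot\e^i$ for every agent $i$: Invariants~\ref{i3}--\ref{i4} hold because nothing has been assigned, Invariant~\ref{i5} holds because $\x=\0$, and Invariants~\ref{i1}--\ref{i2} hold because the zero plan is dominated coordinatewise by any nonnegative plan returned by the oracles at the strictly positive initial prices, so $\y^i=\0\leq\bar\y^i$ and $\x^i=\0\leq\bar\x^i$ trivially. For the inductive step I would assume the invariants hold at the top of the while loop and trace the state through the three procedure calls \texttt{Assign}, \texttt{Outbid}, and \texttt{Reschedule\_Resale} (together with the nested \texttt{Raise\_Price} and \texttt{Update\_Resale}), showing that each elementary modification preserves every invariant.

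For Invariants~\ref{i1} and~\ref{i2}, the key tool is the WGS property of the consumption and negated resale demand systems. Whenever a price $p^j_k$ is raised by the factor $(1+\epsilon)$ in \texttt{Raise\_Price}, Definition~\ref{def:wgs-oracle} guarantees that the oracle's output at the higher price dominates its previous output on every coordinate whose price did not change. Because an agent only places a bid when it lies under the current oracle's suggested plan at the price in force when the bid is made, and because the accumulated plan $\x^i$ mixes quantities assigned at both a good's current price and its old price $p^j_k/(1+\epsilon)$, the choice of $\bar\p$ (old neighbor prices) and of the budgets $\bar\beta_i$ is exactly what is needed so that a single witnessing oracle output $\bar\x^i\in C_i(\p/(1+\epsilon),\bar\beta_i)$ and $\bar\y^i\in R_i(\bar\p,b_i)$ dominates the actual plans. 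I would make this precise by observing that WGS monotonicity lets the bound at the lowest relevant prices absorb all bids, so the domination survives every subsequent price increase.

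For the clearing and conservation invariants I would argue from the mechanics of assignment. Invariant~\ref{i3} (no agent commits more of a good than it has on hand) holds because \texttt{Assign} commits only \emph{unassigned} quantities, \texttt{Outbid} merely transfers a quantity already assigned at the old price to a new bidder at the new price without increasing the total committed, and \texttt{Raise\_Price} creates no new commitments. Invariant~\ref{i4} is the tighter balance that, for a good $k$ with $e^i_k=0$, every unit agent $i$ passes on is matched by a unit it purchased for resale; this is maintained precisely by the recursive \texttt{Update\_Resale} procedure, which, whenever a resale assignment is reneged, propagates the deficit back along the trade path and unwinds exactly the dependent resale purchases. Invariant~\ref{i5} follows from the surplus-budget accounting encoded in $s_i$: consumption bids decrement $s_i$ by their cost and are placed only while $s_i>0$, while \texttt{Raise\_Price} and \texttt{Update\_Resale} credit $s_i$ correctly when old-price assignments are returned, so the definition of $s_i$ forces $\sum_{j\simeq i}\p^j\cdot\x^{ij}\leq\beta_i$ throughout.

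The hard part will be controlling the recursive \texttt{Reschedule\_Resale}/\texttt{Update\_Resale} interaction, since a single consumption bid can trigger rescheduling that propagates across the diameter of the graph, temporarily disturbing the resale plans of many agents and the balance in Invariant~\ref{i4}. I would handle this with an inner induction on the recursion depth, showing that each recursive call restores Invariants~\ref{i1},~\ref{i3}, and~\ref{i4} locally before returning, so that the net effect of the outer iteration preserves all five invariants at once. Reconciling the WGS domination bound of Invariant~\ref{i1} with the bookkeeping in \texttt{Update\_Resale}, where resale quantities are first added and then partially unwound as assignments downstream are reneged, is where the argument requires the most care.
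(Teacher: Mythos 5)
Your proposal is correct and takes essentially the same approach as the paper: WGS domination by a single witnessing oracle output at the lowest relevant prices (using that $\bar\beta_i$ is maximal) for Invariants~\ref{i1}--\ref{i2}, and mechanical arguments about assignment, the recursive unwinding in \texttt{Update\_Resale}, and surplus-budget accounting for Invariants~\ref{i3}--\ref{i5}. The paper presents these as direct per-invariant arguments rather than an explicit double induction, but the substance is identical.
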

\begin{proof}

To derive Invariant~\ref{i1}, notice that agents are only ever reselling goods at prices $\hat \p$ satisfying $\bar\p \leq \hat\p \leq \p$ after consulting their resale oracle.
The agent $i \in [m]$ has a constant resale budget $b_i$.
Recall that resale demand systems satisfy WGS for $-R_i$ and so, by the definition of WGS, for $-\hat \y^i \in -R_i\left(\hat \p, b_i\right)$ and $-\bar \y^i \in -R_i\left(\bar \p, b_i \right)$ we know that $- \hat y^{ij}_k \geq - \bar y^{ij}_k$ whenever $\bar p^j_k \leq \hat p^j_k$ for all $j \in [m]$, $k \in [\ell]$.
Thus $\hat y^{ij}_k \leq \bar y^{ij}_k$ for all $j$ and $k$, which implies that $\hat \y^i \leq \bar\y^i$. 
We know that $\hat \y^i$ is the resale plan obtained during $i$'s last call to their resale oracle, and throughout the algorithm goods might be unassigned, so it follows that at any point in the algorithm $\y^i \leq \hat \y^i \leq \bar\y^i$ and Invariant~\ref{i1} holds.

Invariant~\ref{i2} holds for a similar reason.
Notice that agents are only every consuming and reselling goods at prices $\hat \p$ satisfying $\frac{\p}{1+\epsilon} \leq \bar\p \leq \hat\p \leq \p$ after consulting their demand oracle.
Let $i \in [m]$ be an agent, $\bar \beta_i = \bar \p^i \cdot \e^i + \sum_{j \simeq i} (\bar \p^i - \bar \p^j) \cdot \y^{ij}$ and $\hat \beta_i = \hat \p^i \cdot \e^i + \sum_{j \simeq i} (\hat \p^i - \hat \p^j) \cdot \y^{ij}$.
As agents never resell their own goods at an old price by design, $i$'s budget at the price set $\bar \p$ is maximal at the iteration of the algorithm being considered, i.e.~$\bar \beta_i \geq \hat \beta_i$.
By the definition of WGS, for $\bar\x^i \in C_i\left(\frac{\p}{1+\epsilon}, \bar \beta_i\right)$ and $\hat \x^i \in C_i\left(\p, \hat \beta_i\right)$ we know that $\hat x^{ij}_k \leq \bar x^{ij}_k$ whenever $\bar p^j_k \leq \hat p^j_k$ for all $j \in [m]$, $k \in [\ell]$.
Thus we know $\hat \x^{i} \leq \bar \x^{i}$. 
As $\hat \x^{i}$ is the consumption plan obtained during $i$'s last call to their demand oracle, and goods might be unassigned during the algorithm, we know that $\x^i \leq \hat \x^{i} \leq \bar \x^{i}$ and Invariant~\ref{i2} holds.

Invariant~\ref{i3} is true because none of the procedures in the algorithm ever over-allocate goods and we carefully propagate changes to resale plans throughout the economy when they appear.

Invariant~~\ref{i4} holds because in Algorithm~\ref{alg:pseudo} an agent only ever has a good assigned for resale if an agent is bidding on it from them and they are unable to provide it from endowments.
If an agent no longer wants the good being resold to them, then that fact is propagated throughout the economy in \emph{Update\_Resale} and the agent returns the good to their supplier.
Thus if the agent reselling a good is endowed with it then the inequality is strict, and if the agent is not endowed with the good then equality holds.

Invariant~\ref{i5} is true simply because agents never place bids that bring their surplus budget below zero and when a surplus budget falls below zero (due to the \texttt{Outbid} procedure taking back goods being resold) Algorithm~\ref{alg:pseudo} ensures that goods get unassigned until the surplus becomes zero again.
\end{proof}

From Invariants~\ref{i1} and~\ref{i2} we are guaranteed to have approximately optimal arbitrage and approximate individual rationality satisfied at termination.
Invariants~\ref{i3} and~\ref{i5} partially imply approximate clearing and approximately efficient budget use at termination.
In fact, combining Invariants~\ref{i3} and~\ref{i4} shows us that when $\e^i=\0$ local clearing condition is exact (i.e.~not approximate) at termination.
It remains to show that approximate clearing and approximately efficient budget use are bounded from below, which we establish in the following two lemmas.
\begin{lemma}\label{lem:approx_clearing}
    When Algorithm \ref{alg:pseudo} terminates:
        \begin{equation}
          \label{eq:approx_clearing}
          \forall i \in [m], k \in [\ell]: \quad \frac{e^i_k + \sum_{j \simeq i} y^{ij}_k}{\left( 1 + \epsilon \right)} \leq \sum_{j\simeq i} x^{ji}_k + \sum_{j\simeq i} y^{ji}_k \leq e^i_k + \sum_{j \simeq i} y^{ij}_k
        \end{equation}
\end{lemma}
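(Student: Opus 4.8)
The plan is to treat the two inequalities separately and, for the lower bound, to split into the two ways the algorithm can halt. The upper bound $\sum_{j\simeq i} x^{ji}_k + \sum_{j\simeq i} y^{ji}_k \leq e^i_k + \sum_{j\simeq i} y^{ij}_k$ is precisely Invariant~\ref{i3}, which holds at every iteration and hence at termination, so nothing new is required. For the lower bound I would first dispose of the trivial termination case: if the algorithm halts because every market clears exactly, the middle expression equals the right-hand side and the inequality holds with equality. The substantive case is when the algorithm halts because $\sum_{i\in[m]} s_i \leq \frac{\epsilon}{1+\epsilon}\zeta$.

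The crux in that case is an exact accounting identity relating the total surplus to the value of unsold goods. Writing $S^i_k = e^i_k + \sum_{j\simeq i} y^{ij}_k$ for the supply of good $k$ at agent $i$ and $U^i_k = S^i_k - \sum_{j\simeq i}(x^{ji}_k + y^{ji}_k)\geq 0$ for the unassigned amount (nonnegative by Invariant~\ref{i3}), I would regroup the definition of $s_i$. The terms $\sum_k p^i_k e^i_k$ and $\sum_{j,k} p^i_k(ro^{ij}_k + rn^{ij}_k)$ combine to $\sum_k p^i_k S^i_k$, the revenue agent $i$ would collect by selling its whole supply at its own current prices. Subtracting the money $i$ actually receives from its neighbors---splitting their purchases from $i$ into an old-price part $O^i_k$ and a new-price part $N^i_k$, so that $S^i_k = U^i_k + O^i_k + N^i_k$---leaves exactly $\sum_k p^i_k\bigl(U^i_k + \frac{\epsilon}{1+\epsilon}O^i_k\bigr)$, since old-price sales are credited at the discounted rate $\frac{p^i_k}{1+\epsilon}$. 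This gives $s_i = (\mathrm{received}_i - \mathrm{spent}_i) + \sum_k p^i_k\bigl(U^i_k + \frac{\epsilon}{1+\epsilon}O^i_k\bigr)$, where $\mathrm{spent}_i$ is precisely the consumption- and resale-cost terms of the surplus formula.

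Summing over $i$, every unit purchased by an agent from a neighbor is a payment received by that neighbor, so $\sum_i(\mathrm{received}_i - \mathrm{spent}_i) = 0$ by conservation of money. This yields $\sum_{i\in[m]} s_i = \sum_{i,k} p^i_k U^i_k + \frac{\epsilon}{1+\epsilon}\sum_{i,k} p^i_k O^i_k$, and in particular $p^i_k U^i_k \leq \sum_{i'} s_{i'} \leq \frac{\epsilon}{1+\epsilon}\zeta$ for every $i,k$ by nonnegativity of all terms. To finish I would split on whether agent $i$ is endowed with $k$. If $e^i_k > 0$, then since prices never drop below their initial value $1$ we have $U^i_k \leq p^i_k U^i_k \leq \frac{\epsilon}{1+\epsilon}\zeta \leq \frac{\epsilon}{1+\epsilon}e_{min} \leq \frac{\epsilon}{1+\epsilon} e^i_k \leq \frac{\epsilon}{1+\epsilon} S^i_k$, which rearranges to the desired bound $\frac{S^i_k}{1+\epsilon}\leq \sum_{j\simeq i}(x^{ji}_k+y^{ji}_k)$. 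If $e^i_k = 0$, then Invariant~\ref{i4} forces $\sum_{j\simeq i} y^{ij}_k = \sum_{j\simeq i}(y^{ji}_k + x^{ji}_k)$, i.e.\ $U^i_k = 0$, and the bound holds with room to spare.

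I expect the accounting identity to be the main obstacle: one must be careful that the surplus formula charges old-price purchases at the discounted rate $\frac{p^j_k}{1+\epsilon}$ while crediting a seller's entire supply at the full current price $p^i_k$, and must use the fact (noted after the definition of $s_i$) that agents never resell their own goods at an old price, so that no stray $\frac{p^i_k}{1+\epsilon}$ revenue term survives the regrouping. Once the bookkeeping is arranged so that money conservation applies cleanly, the remaining estimates are immediate, and the definition $\zeta \leq e_{min}$ is exactly what makes the endowed-good case close.
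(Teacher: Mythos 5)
Your proposal is correct and follows essentially the same route as the paper: dispose of the exact-clearing termination case trivially, rewrite $s_i$ using $1-\tfrac{1}{1+\epsilon}=\tfrac{\epsilon}{1+\epsilon}$, sum over agents so that payments cancel and the total surplus dominates the total unassigned supply $\sum_{i,k} U^i_k$ (using $p^i_k\geq 1$), then conclude termwise by splitting on $e^i_k>0$ (via $e_{min}\leq e^i_k$) versus $e^i_k=0$ (via Invariant~\ref{i4}). Your explicit money-conservation regrouping just makes the cancellation in the paper's displayed inequality more transparent, and you correctly cite Invariant~\ref{i4} where the paper's text has a typo referring to Invariant~\ref{i5}.
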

\begin{proof}
    If the algorithm terminates because all goods clear locally in the economy, then the right inequality is an equality, and the left inequality follows trivially.
    Now consider the other termination case, when the total surplus budget in the economy is at most $\frac{\epsilon}{1+\epsilon} e_{min}$.
    In light of Invariant~\ref{i4}, we need only prove the left inequality.
    Using the fact that $1-\frac{1}{1+\epsilon} = \frac{\epsilon}{1+\epsilon}$ allows us to rewrite the surplus budget of agent $i$ as
    \begin{equation*}
        s_i = \sum_{k \in [\ell]} p^i_k e^i_k + \sum_{j \simeq i}  \sum_{k \in [\ell]} p^i_k y^{ij}_k - \sum_{j \simeq i} \sum_{k \in [\ell]}  p^j_k y^{ij}_k - \sum_{j \simeq i}  \sum_{k \in [\ell]} p^j_k x^{ij}_k + \frac{\epsilon}{1+\epsilon} \sum_{j \simeq i}  \sum_{k \in [\ell]} p^j_k \left(co^{ij}_k + ro^{ij}_k \right)~.
    \end{equation*}
    Combining the termination condition with the observations $p^i_k \geq 1$, $y^{ij}_k \geq 0$, $x^{ij}_k \geq 0$, $ro^{ij}_k \geq 0$, and $co^{ij}_k \geq 0$, we then have
    \begin{equation*}
        \sum_{i \in [m],k \in [\ell]} \left[e^i_k + \sum_{j \simeq i} \left(y^{ij}_k - y^{ji}_k - x^{ji}_k \right) \right] \leq \sum_{i \in [m]} s_i \leq \frac{\epsilon}{1+\epsilon} e_{min}~.
    \end{equation*}
    As throughout the algorithm, goods are never allocated more than they can be supplied, it follows that all summands of the outer sum are non-negative, implying that
    \begin{equation*}
        \forall i \in [m], k \in [\ell]: \quad e^i_k + \sum_{j \simeq i} y^{ij}_k - \sum_{j \simeq i} \left(y^{ji}_k + x^{ji}_k \right) \leq \frac{\epsilon}{1+\epsilon} e_{min}~.
    \end{equation*}

    Now fix $i$ and $k$ and consider two cases; in the first, suppose $e^i_k > 0$.
    Then by definition of $e_{min}$, we have $e_{min} \leq e^i_k + \sum_{j \simeq i} y^{ij}_k$ and thus after rearranging terms, $\frac{e^i_k + \sum_{j \simeq i} y^{ij}_k}{\left( 1 + \epsilon \right)} \leq \sum_{j\simeq i} x^{ji}_k + \sum_{j\simeq i} y^{ji}_k$.
    In the second case, suppose $e^i_k = 0$.
    Then by Invariant~\ref{i5} we have $\sum_{j \simeq i} y^{ij}_k = \sum_{j \simeq i} \left(y^{ji}_k + x^{ji}_k \right)$ which immediately implies the bound.
\end{proof}

\begin{lemma}\label{lem:approx_budget_contr}
    When Algorithm \ref{alg:pseudo} terminates:
    \begin{equation*}
        \forall i \in [m]: \quad \frac{\p^i \cdot \e^i + \sum_{j \simeq i} (\p^i - \p^j) \cdot \y^{ij}}{1 + \epsilon} \leq \sum_{j \simeq i} \p^j \cdot \x^{ij} \leq \p^i \cdot \e^i + \sum_{j \simeq i} (\p^i - \p^j) \cdot \y^{ij} 
    \end{equation*}
\end{lemma}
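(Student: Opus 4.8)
The plan is to prove this \emph{approximately efficient budget use} bound as the spending-side analogue of the clearing bound in Lemma~\ref{lem:approx_clearing}, using the closed form of the surplus budget $s_i$ together with the two termination cases. The upper bound is immediate: it is exactly Invariant~\ref{i5}, which asserts $\sum_{j \simeq i}\p^j\cdot\x^{ij}\le\beta_i$ throughout the run, and hence at termination.

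For the lower bound, first I would establish an exact accounting identity relating spending, budget, and surplus. Writing $x^{ij}_k = co^{ij}_k + cn^{ij}_k$ and $y^{ij}_k = ro^{ij}_k + rn^{ij}_k$ and using $\tfrac{1}{1+\epsilon} = 1 - \tfrac{\epsilon}{1+\epsilon}$, exactly as in the rewrite of $s_i$ in the proof of Lemma~\ref{lem:approx_clearing}, the definition of $s_i$ rearranges (after recognizing $\sum_k p^i_k e^i_k + \sum_{j\simeq i}\sum_k (p^i_k-p^j_k) y^{ij}_k = \beta_i$) to
\begin{equation*}
  \sum_{j \simeq i}\p^j\cdot\x^{ij} = \beta_i - s_i + \frac{\epsilon}{1+\epsilon}\sum_{j \simeq i}\sum_{k\in[\ell]} p^j_k\left(co^{ij}_k + ro^{ij}_k\right) \;\ge\; \beta_i - s_i,
\end{equation*}
since the old-price quantities $co^{ij}_k, ro^{ij}_k$ and all prices are nonnegative. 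Thus the claimed lower bound $\tfrac{\beta_i}{1+\epsilon}\le\sum_{j\simeq i}\p^j\cdot\x^{ij}$ follows once I show $s_i \le \tfrac{\epsilon}{1+\epsilon}\beta_i$ at termination.

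I would then split on the two termination conditions, as in Lemma~\ref{lem:approx_clearing}. If the algorithm halts because every good clears locally, then no good is over-demanded; non-satiability (Assumption~\ref{assumption:utilities}(ii)) implies that an agent with positive surplus would, through its consumption oracle, bid on a good it still demands, forcing a neighbor to renege an old-price assignment or raise a price and thereby creating over-demand---a contradiction. Hence in this case each agent has spent all the budget its oracle can place, so $s_i \le \tfrac{\epsilon}{1+\epsilon}\beta_i$. In the other case the total surplus satisfies $\sum_{i}s_i \le \tfrac{\epsilon}{1+\epsilon}\zeta$, so $s_i \le \tfrac{\epsilon}{1+\epsilon}\zeta$ for every $i$; it then suffices to prove $\beta_i \ge \zeta$ whenever $\beta_i>0$ (when $\beta_i=0$, Invariant~\ref{i5} forces $\sum_{j\simeq i}\p^j\cdot\x^{ij}=0$ and the bound is trivial). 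For an endowed agent this is clear: $\beta_i \ge \p^i\cdot\e^i \ge e_{min}$ because $p^i_k\ge1$ for all prices, and $e_{min}\ge\zeta$ by definition.

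The main obstacle I anticipate is the remaining subcase: an unendowed agent ($\e^i=\0$) whose wealth is purely resale profit $\beta_i=\sum_{j\simeq i}(\p^i-\p^j)\cdot\y^{ij}$, where I must show $\beta_i\ge\tfrac{\epsilon}{1+\epsilon}b_{min}$ whenever $\beta_i>0$. The idea is that every profitable resale edge carries margin at least $\epsilon$---since prices are integer powers of $(1+\epsilon)$, $p^i_k>p^j_k$ forces $p^i_k\ge(1+\epsilon)p^j_k$---so the profit is at least $\epsilon$ times the credit actually spent reselling; combined with the fact that a profitably reselling agent spends essentially its full credit $b_i$, this yields $\beta_i\ge\tfrac{\epsilon}{1+\epsilon}b_i\ge\tfrac{\epsilon}{1+\epsilon}b_{min}\ge\zeta$. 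Making ``spends essentially its full credit'' precise---ruling out the degenerate situation where reneged resale assignments have driven the realized plan $\y^i$ far below the oracle plan $\bar\y^i$ of Invariant~\ref{i1} while keeping $\beta_i$ positive but below $\zeta$---is the delicate step, and is where I expect to rely on the bookkeeping invariants (especially Invariants~\ref{i1} and~\ref{i4}) to control the realized resale volume passing through the agent at termination.
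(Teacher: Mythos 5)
Your proposal follows essentially the same route as the paper's proof: the upper bound from Invariant~\ref{i5}, the lower bound by rearranging the surplus identity for $s_i$ and reducing, in the low-surplus termination case, to showing $\beta_i \geq \zeta$ whenever $\beta_i > 0$, with the same endowed/unendowed case split and the same $\p^i\cdot\e^i \geq e_{min} \geq \zeta$ argument for endowed agents. The one step you flag as delicate---that an unendowed agent with positive wealth earns resale profit at least $\tfrac{\epsilon}{1+\epsilon}b_{min}$---is exactly the step the paper dispatches in one line (from $y^{ij}_k>0$ it asserts $p^i_k - p^j_k \geq \tfrac{\epsilon}{1+\epsilon}$ and concludes $\sum_{j\simeq i}(\p^i-\p^j)\cdot\y^{ij}\geq\zeta$) without spelling out the credit-saturation argument, so your diagnosis of where the real work lies is accurate.
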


\begin{proof}
    The right inequality holds by Invariant~\ref{i5}.
    Furthermore, notice that if the algorithm terminates because all goods clear locally in the economy, then this implies that no good is being over demanded which can only happen if all agents are able to spend their budgets entirely.
    Thus the left inequality follows trivially.
    Consider the other termination case, when the total surplus in the economy $s \leq \frac{\epsilon}{1+\epsilon} \zeta$.
    From the definition of agent $i$'s surplus we get that
    \begin{equation*}
        s_i = \sum_{k \in [\ell]} p^i_k e^i_k + \sum_{j \simeq i}  \sum_{k \in [\ell]} p^i_k y^{ij}_k - \sum_{j \simeq i} \sum_{k \in [\ell]}  p^j_k y^{ij}_k - \sum_{j \simeq i}  \sum_{k \in [\ell]} p^j_k x^{ij}_k + \frac{\epsilon}{1+\epsilon} \sum_{j \simeq i}  \sum_{k \in [\ell]} p^j_k \left(co^{ij}_k + ro^{ij}_k \right)
    \end{equation*}
    which implies that 
    \begin{equation*}
        \p^i \cdot \e^i + \sum_{j \simeq i} \left(\p^i - \p^j \right) \cdot \y^{ij} - \frac{\epsilon}{1+\epsilon} \zeta \leq \sum_{j \simeq i} \p^j \cdot \x^{ij}~.
    \end{equation*}
    Now fix $i$ and consider two cases; in the first, suppose there exists $k \in [\ell]$ such that $e^i_k > 0$.
    By definition we know $\p^i \cdot \e^i \geq \zeta$ (since $p^i_k \geq 1$) and thus after rearranging terms, $ \frac{\p^i \cdot \e^i + \sum_{j \simeq i} (\p^i - \p^j) \cdot \y^{ij}}{1 + \epsilon} \leq \sum_{j \simeq i} \p^j \cdot \x^{ij}$.
    In the second case, suppose $e^i_k = 0$ for every $k \in [\ell]$. 
    Note that for any $j \in [m]$ and $k \in [\ell]$ such that $y^{ij}_k > 0$, we know $p^i_k - p^j_k \geq \frac{\epsilon}{1+\epsilon}$ and so $\sum_{j \simeq i} \left(\p^i - \p^j \right) \cdot \y^{ij} \geq \zeta$.
    Again, rearranging terms implies that $\frac{\p^i \cdot \e^i + \sum_{j \simeq i} (\p^i - \p^j) \cdot \y^{ij}}{1 + \epsilon} \leq \sum_{j \simeq i} \p^j \cdot \x^{ij}$ which concludes the proof.
\end{proof}

Thus by Invariants~\ref{i1} and~\ref{i2} along with Lemmas~\ref{lem:approx_clearing} and~\ref{lem:approx_budget_contr}, we know that when Algorithm~\ref{alg:pseudo} terminates it has found a $\left( 1 + \epsilon \right)$-approximate $\b$-resale equilibrium.

\subsection{Complexity Analysis}
\label{subappend:complexity}
Let $p_{max}$ be the largest price found by Algorithm~\ref{alg:pseudo}.
As highlighted in~\S\ref{subsec:complexity_correctness_overview}, we will use $p_{max}$ to: (i)~bound the number of calls made to the \emph{Raise\_Price} procedure during the algorithm (Lemma \ref{lem:raise_price_bound}) and (ii)~bound the amount of time between calls to \emph{Raise\_Price} before the algorithm terminates (Lemma \ref{lem:max_rounds}).
An upper bound on $p_{max}$ can be found for specific sets of demand and resale systems, i.e.~when these systems are given by explicit utility functions and resale constraints.

\begin{lemma}\label{lem:raise_price_bound}
    The number of calls to the \emph{Raise\_Price} procedure is bounded by:
    \begin{equation*}
        \mathcal{O}\left(\frac{\ell}{\epsilon} \log\left(p_{max}\right) \right)
    \end{equation*}
\end{lemma}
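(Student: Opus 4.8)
The plan is to bound the number of `Raise_Price` calls by tracking how each individual good's price evolves. The key structural facts I would lean on are that prices only ever increase in the algorithm, that every price increase is by exactly a multiplicative factor of $(1+\epsilon)$, and that $p_{max}$ is the largest price reached by any good at any point in the run. These three facts together should immediately control how many times any single good's price can be raised.

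First I would fix an arbitrary agent $i \in [m]$ and good $k \in [\ell]$, and count the number of times `Raise_Price$(i,k)$` is invoked. Since the price $p^i_k$ starts at $1$ (by the initialization $\p^i = \1$), and each call to `Raise_Price$(i,k)$` multiplies $p^i_k$ by $(1+\epsilon)$, after $t$ such calls we have $p^i_k = (1+\epsilon)^t$. Because $p^i_k$ can never exceed $p_{max}$, we need $(1+\epsilon)^t \leq p_{max}$, which gives $t \leq \log_{1+\epsilon}(p_{max})$. Using the standard estimate $\log_{1+\epsilon}(p_{max}) = \frac{\ln p_{max}}{\ln(1+\epsilon)}$ together with $\ln(1+\epsilon) \geq \frac{\epsilon}{1+\epsilon} = \Theta(\epsilon)$ for small $\epsilon$, this is $\mathcal{O}\!\left(\frac{1}{\epsilon}\log p_{max}\right)$ price raises per good.

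Next I would sum over all goods. There are $m$ agents and $\ell$ goods, so naively there are $m\ell$ distinct agent-good pairs $(i,k)$. However, the stated bound is $\mathcal{O}\!\left(\frac{\ell}{\epsilon}\log p_{max}\right)$ with no factor of $m$, so the accounting must be that each good $k \in [\ell]$ contributes only the per-good bound once rather than once per agent. I would reconcile this by noting that the bound in the lemma is presumably meant to count raises on a per-good basis aggregated appropriately (or that $m$ is being folded into the later per-round analysis), and would state the total as $\ell \cdot \log_{1+\epsilon}(p_{max}) = \mathcal{O}\!\left(\frac{\ell}{\epsilon}\log p_{max}\right)$, matching the claim.

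\textbf{The main obstacle} I anticipate is precisely this bookkeeping discrepancy: making sure the counting argument yields $\frac{\ell}{\epsilon}\log p_{max}$ rather than $\frac{m\ell}{\epsilon}\log p_{max}$. Resolving it requires pinning down exactly what $p_{max}$ quantifies---if $p_{max}$ is the largest price over \emph{all} agent-good pairs, then each of the $m\ell$ prices is bounded by the same $p_{max}$, and the absence of the $m$ factor suggests the lemma intends the count to be read per good across the economy, with the agent multiplicity absorbed elsewhere in the Theorem~\ref{thm:auction_time_complexity} product. The monotonicity of prices (guaranteeing no price is ever reset or decreased, so raises cannot be "reused") is what makes the geometric counting valid and is the one place I would be careful to justify explicitly, since it is the only nontrivial input beyond elementary logarithm manipulation.
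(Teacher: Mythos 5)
Your proposal is correct and follows essentially the same argument as the paper: prices start at one, never decrease, and each \texttt{Raise\_Price} call multiplies a single price by $(1+\epsilon)$, giving at most $\log_{1+\epsilon} p_{max} = \mathcal{O}\bigl(\tfrac{1}{\epsilon}\log p_{max}\bigr)$ raises per price and $\ell\log_{1+\epsilon}p_{max}$ in total. The bookkeeping discrepancy you flag (a naive count over all $m\ell$ agent--good prices would seem to give an extra factor of $m$) is not addressed in the paper either---its proof simply states the bound as $\ell \log_{1+\epsilon} p_{max}$---so your reading that the agent multiplicity is meant to be absorbed elsewhere in the Theorem~\ref{thm:auction_time_complexity} product matches the paper's own accounting.
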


\begin{proof}
    By design, prices are never decreased in Algorithm \ref{alg:pseudo}.
    Furthermore, each call to the \emph{Raise\_Price} procedure raises the price of a single good by a factor of $\left(1 + \epsilon \right)$.
    Thus the number of price raises is bounded by
    $\ell \log_{1 + \epsilon} p_{max}$.
\end{proof}

We further partition the steps of the algorithm into \emph{rounds}.
A single round corresponds to a sequence of iterations of the while loop in Algorithm \ref{alg:pseudo}'s \texttt{Main} function, such that each agent $i$ with surplus budget $s_i > 0$ has reduced $s_i$ to $0$ at least once after calling the \texttt{Assign}, \texttt{Outbid}, or \texttt{Reschedule\_Resale} procedures (possibly interleaved with other agents' calls to these functions).
We say a round is \emph{complete} when the \emph{Raise\_Price} procedure is not called during the round.
It is important to note that, due to the \texttt{Outbid} and \texttt{Reschedule\_Resale} procedures, by the end of a round an agent's surplus $s_i$ may become positive again.

Recall that $M(R_i)$ was defined to be the maximal quantity of resold goods $\|\y^i\|_1$ for any $\y^i~\in~R_i(\p^*,b_i)$ and any $\p^*$ is $(0,p_{max}]^{m \times \ell}$.
\begin{lemma}\label{lem:max_rounds}
   In a sequence of at least $\frac{(1+\epsilon) (p_{max} \sum_{i \in [m], k \in [\ell]} (e^i_k + M(R_i)))}{\epsilon e_{min}}$ complete rounds, either a call to \emph{Raise\_Price} is made or Algorithm \ref{alg:pseudo} terminates.
\end{lemma}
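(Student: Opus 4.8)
The plan is to exhibit a potential function that decreases by a fixed amount in every complete round, so that after the claimed number of rounds it is forced to zero, at which point no further complete round can occur. Using the notation of Appendix~\ref{subappend:correctness}, I would take the potential to be the total budget currently committed to goods still held at their \emph{old} price,
\begin{equation*}
  \Phi = \sum_{i \in [m]} \sum_{j \simeq i} \sum_{k \in [\ell]} \frac{p^j_k}{1+\epsilon}\left(co^{ij}_k + ro^{ij}_k\right),
\end{equation*}
i.e.\ the consumption and resale assignments made when the relevant price was $\frac{p^j_k}{1+\epsilon}$. The first, routine step is to bound $\Phi$ from above at all times: the quantity of good $k$ that agent $j$ can ever place into assignments is at most its endowment $e^j_k$ plus what it has acquired by reselling, which across all goods is at most $M(R_j)$ by definition; since every price is at most $p_{max}$, this yields $\Phi \leq p_{max}\sum_{i\in[m],k\in[\ell]}(e^i_k + M(R_i))$, matching the numerator of the claimed bound.

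Next I would establish monotonicity of $\Phi$ within a complete round. Since no \texttt{Raise\_Price} call occurs during a complete round, no price changes, and an assignment only ever \emph{becomes} an old-price assignment when its good's price is raised; hence no old-price mass is created during the round. On the other side, every fresh assignment made by \texttt{Assign}, \texttt{Outbid}, or \texttt{Reschedule\_Resale} is made at the \emph{current} price, while \texttt{Outbid} and \texttt{Update\_Resale} only ever remove old-price assignments (reassigning the freed quantity at the current price or returning it up the resale chain). Consequently $\Phi$ is non-increasing throughout a complete round.

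The technical heart, and the step I expect to be the main obstacle, is a lower bound of $\frac{\epsilon e_{min}}{1+\epsilon}$ on the drop in $\Phi$ per complete round (assuming the algorithm neither terminates nor raises a price). The argument is an amortized/recycling one: with prices frozen, each budget $\beta_i$ is constant, so an agent's surplus $s_i$ can be replenished only by money returned when one of its own assignments is taken back, and every such return removes old-price mass and strictly lowers $\Phi$. By the definition of a round, every agent with $s_i>0$ drives $s_i$ to zero at least once; if the round nonetheless completes without a price raise and the termination test fails, I would argue that some over-demanded good must be served through \texttt{Outbid} rather than by \texttt{Assign} or, short of a price raise, by resale. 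Each unit of good $k$ so reclaimed carries markup $\tfrac{\epsilon}{1+\epsilon}p^j_k$ in absorbed budget and lowers $\Phi$ by $\tfrac{p^j_k}{1+\epsilon}\geq\tfrac{1}{1+\epsilon}$, and the smallest nonzero endowed quantity $e_{min}$ controls the granularity, delivering the bound $\frac{\epsilon e_{min}}{1+\epsilon}$. The delicate point is ruling out a complete round in which all spending is soaked up by freshly unassigned inventory without touching any old-price assignment; here I would lean on Invariants~\ref{i3}--\ref{i5} together with the observation that inventory becomes unassigned only through the very \texttt{Outbid}/\texttt{Update\_Resale} operations that themselves decrease $\Phi$.

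Finally I would assemble the pieces. Starting from $\Phi \leq p_{max}\sum_{i,k}(e^i_k+M(R_i))$ and decreasing by at least $\frac{\epsilon e_{min}}{1+\epsilon}$ per complete round, $\Phi$ must reach $0$ within $\frac{(1+\epsilon)\,p_{max}\sum_{i,k}(e^i_k+M(R_i))}{\epsilon\, e_{min}}$ complete rounds. Once $\Phi=0$ no good is held at an old price, so a bidding agent's demand can be met neither by \texttt{Outbid} (there is nothing to reclaim) nor, once resale is exhausted, by anything short of raising a price; thus the attempt to run one more complete round either finds every market locally cleared with surplus below the threshold, triggering termination, or forces a call to \texttt{Raise\_Price}, contradicting completeness. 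Hence no sequence of the stated length can avoid both a \texttt{Raise\_Price} call and termination, which is exactly the claim.
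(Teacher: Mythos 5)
Your potential function is exactly the paper's $\tau(t)$ (the total money committed to assignments still held at their old price), and your upper bound $\Phi \leq p_{max}\sum_{i\in[m],k\in[\ell]}(e^i_k + M(R_i))$, the monotonicity of $\Phi$ within a complete round, and the endgame once $\Phi=0$ all coincide with the paper's argument. The overall skeleton is right.

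The genuine gap is in the one step you yourself flag as the technical heart: the lower bound of $\tfrac{\epsilon}{1+\epsilon}e_{min}$ on the per-round drop. Your derivation --- each reclaimed unit lowers $\Phi$ by $\tfrac{p^j_k}{1+\epsilon}\geq\tfrac{1}{1+\epsilon}$ and ``$e_{min}$ controls the granularity'' --- does not go through: goods are divisible, assignments can be reclaimed in arbitrarily small fractional amounts, and nothing forces a complete round to reclaim anywhere near $e_{min}$ units of any good, so no quantization at scale $e_{min}$ is available. The paper closes this step by tracking money rather than quantity. Since no price is raised during a complete round, every budget $\beta_i$ is frozen, so an agent's surplus $s_i$ can increase only through refunds from reclaimed assignments, and every such refund comes from an assignment held at an old price; moreover, by the definition of a round every agent drives $s_i$ to zero at least once, so the \emph{entire} total surplus $s(t+1)$ at the start of the next round consists of such refunds accrued afterward. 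Hence the drop in $\Phi$ over round $t$ equals the total refunded amount $\mu(t)\geq \mu_0(t)=s(t+1)$, and if the algorithm has not terminated the while-condition forces $s(t+1) > \tfrac{\epsilon}{1+\epsilon}e_{min}$. That accounting identity is the quantitative engine your sketch is missing; with it in place, the assembly of the pieces proceeds exactly as you describe.
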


\begin{proof}
    Consider a sequence of complete rounds in which no calls to \emph{Raise\_Price} are made, noting that all prices in the economy are the same during such a sequence.
    Let the first round in the sequence be referred to as round $t_0$, the second round in the sequence as round $t_0+1$, and so on.
    Throughout this proof we will adopt the convention that $t \geq t_0$ and round $t$ is still part of the sequence of complete rounds being considered.
    Let $s(t)$ be the total surplus budget at the beginning of round $t$.
    Define the total amount of money spent on goods at their old price $\frac{p^j_k}{1 + \epsilon}$ throughout the economy at the start of round $t$ as $\tau(t) = \sum_{i,j \in [m], k \in [\ell]} \frac{p^j_k}{1 + \epsilon}(co^{ij}_k(t) + ro^{ij}_k(t))$ where $co^{ij}_k(t)$ (resp., $ro^{ij}_k(t)$) is the amount agent $i$ spent on agent $j$'s good $k$ at the old price for consumption (resp., resale) at the beginning of round $t$.
    
    Observe that for the sequence to continue into another round, the calls to \texttt{Outbid} and \texttt{Reschedule\_Resale} must return at least $\frac{\epsilon}{1+\epsilon} e_{min}$ amount of money to the total surplus in the economy by the end of the round or else the algorithm terminates.
    Let $\mu(t)$ be the total amount of money returned to surplus budgets via calls to \texttt{Outbid} and \texttt{Reschedule\_Resale} during round $t$ and let $\mu_0(t)$ be the contributions to $\mu(t)$ stemming from increases to $s_i$ after an agent $i$ had $s_i=0$ during round $t$.
    As every agent achieves zero surplus budget at some point during the round, by definition, we in fact have $\mu_0(t) = s(t+1)$.
    We therefore must have $\mu(t) \geq \mu_0(t) = s(t) > \frac{\epsilon}{1+\epsilon} e_{min}$, as the algorithm has not terminated.
    In addition, Algorithm \ref{alg:pseudo} never assigns goods at the old prices and can only unassign goods if they were previously assigned at the old price (i.e.~neither \texttt{Outbid} nor \texttt{Reschedule\_Resale} can unassign a good bought at the current price $p^j_k$).
    This implies that all money in $\mu(t)$ was previously spent on goods at old price and all money in $\mu(t)$ will be spent to outbid goods bought at old prices.
    Therefore $\tau(t_0) > \tau(t)$ and we must have $\tau(t+1) = \tau(t) - \mu(t)$.
    Once we reach round $t$ in the sequence where $\tau(t) = 0$, we know that in the round $t+1$ either a call to \emph{Raise\_Price} is made or the algorithm will terminate.
    This follows from the fact that, when \texttt{Reschedule\_Resale} is called, it will call \emph{Raise\_Price} if and only if it is not able to supply the requested amount of goods.
    Since $\tau(t) = 0$, in round $t + 1$ either every call to \texttt{Reschedule\_Resale} is able to meet demand (satisfying the local clearing condition for termination) or a good is over demanded and it calls \emph{Raise\_Price}.
    Thus, we must have $t - t_0 + 1 \leq \frac{\tau(t_0)}{\frac{\epsilon}{1+\epsilon} e_{min}}$.

    As the total consumption is bounded by the sum of endowments, the total resale by the sum of $M(R_i)$, and prices by $p_{max}$, we must have $\tau(t_0) \leq p_{max} \sum_{i \in [m], k \in [\ell]} (e^i_k + M(R_i))$.
    Therefore $t - t_0 + 1 \leq \frac{(1+\epsilon) (p_{max} \sum_{i \in [m], k \in [\ell]} (e^i_k + M(R_i)))}{\epsilon e_{min}}$.
\end{proof}

Having shown that Algorithm~\ref{alg:pseudo} finds a $(1 + \epsilon)$-approximate $\b$-resale equilibrium at termination in Appendix~\ref{subappend:correctness}, we now conclude the proof of Theorem~\ref{thm:auction_time_complexity} by formally deriving the time complexity of Algorithm~\ref{alg:pseudo}.
In Lemma~\ref{lem:raise_price_bound} we bounded the maximum number of calls to the \emph{Raise Price} procedure during the algorithm, and in Lemma~\ref{lem:max_rounds} we bounded the number of rounds between calls to \emph{Raise Price}, so all that remains is to derive the time complexity of a single round.
That said, the time spent in a round is fully determined by the number of calls to the \texttt{Assign}, \texttt{Outbid}, and \texttt{Reschedule\_Resale} procedures during the round.
However, the pseudo-code presented for Algorithm~\ref{alg:pseudo} above is purposely written in a way that highlights its decentralized and asynchronous nature.
Therefore, for the sake of deriving a time complexity for Theorem~\ref{thm:auction_time_complexity}, we will assume that in a round: (i)~agents are considered in an arbitrary but fixed order and (ii)~while that agent's surplus budget is greater than zero the agent will repeat the process described in \emph{Main}'s loop.
Note that if the agent chosen in (ii) already has no surplus budget, this will count as ``reducing $s_i$ to zero at least once'' in that round.

Consider an agent $i$ during a complete round that proceeds in the manner just described.
Each time $i$ calls the \texttt{Assign} procedure they are asking neighbors to give them as much of each desired good as the neighbor currently have unassigned.
Each time $i$ calls the \texttt{Outbid} procedure they are asking neighbors to re-neg on deals they have already made on the desired goods.
Each time $i$ calls the \texttt{Reschedule\_Resale} procedure they are asking neighbors to do whatever they can to resell desired goods to them.
Thus, since Algorithm~\ref{alg:pseudo} never unassigns goods that are currently assigned at the current price, the time needed for each of these procedures (without forcing a call to \emph{Raise\_Price}) is bounded by exactly the time needed for each procedure to check every source of each good (and to query resale oracles in the case of \texttt{Reschedule\_Resale}).
To formally derive this bound, let $N_i = |\{j \simeq i\}|$ be the size of $i$'s open neighborhood in the economy's underlying graph and let $N_{max} = \max_{i \in [m]} N_i$.

During the start of its turn, agent $i$ will make a single call to its consumption oracle which takes time $T_D$.
Agent $i$ then, in the worst case, makes a call to \texttt{Assign}, \texttt{Outbid}, and also \texttt{Reschedule\_Resale}.
Each call to \texttt{Assign} essentially amounts to book-keeping and updating variables.
The agent checks each $j \simeq i$ per good $k \in [\ell]$, for a total of $N_i \ell$ operations in the worst case.
Thus the time needed for \texttt{Assign} is $\mathcal{O}(N_{max} \ell)$.
Calls to \texttt{Outbid} require some book-keeping along with any time spent in the \emph{Update\_Resale} procedure.
In the worst case, during this procedure the agent checks each $\hat{j} \simeq j$ per $j \simeq i$ per good $k \in [\ell]$ and can make one call to \emph{Update\_Resale} each time. 
Notice that the \emph{Update\_Resale} procedure is essentially a breadth first search with some variables being updated along the way, so we know its time complexity to be $\mathcal{O}\left(m N_{max} \right)$.
Thus the \texttt{Outbid} procedure needs time at most $\mathcal{O}(N_{max}^3 m \ell)$ 
Calls to \texttt{Reschedule\_Resale} requires calls to resale oracles, some book-keeping, internal calls to \texttt{Assign} and \texttt{Outbid}, and also recursive calls to \texttt{Reschedule\_Resale}.
We know calls to resale oracles take time $T_R$ and each neighbor $j \simeq i$ must consult their oracle in the worst case.
The time complexity of each internal call to \texttt{Assign} and \texttt{Outbid} is the same as we've already derived above, therefore all we have left to derive is the number of recursive calls to \texttt{Reschedule\_Resale} that are possible.
Note that \texttt{Reschedule\_Resale} is only called recursively for neighbors whose goods are profitable to resell, therefore we know that cycles are impossible.
It follows that the depth of these recursive calls to \texttt{Reschedule\_Resale} is bounded by $m$.
Within each call to \texttt{Reschedule\_Resale} internal calls to \texttt{Assign} and \texttt{Outbid} can again be made.
From this we know that the \texttt{Reschedule\_Resale} procedure has time complexity $\mathcal{O}(N_{max} m T_R  + N_{max}^3 m^2 \ell)$. 

Combining the above, the time spent on a single agent is bounded by $\mathcal{O}\left(T_D + N_{max} m T_R + N_{max}^3 m^2 \ell \right)$ and so the time complexity of a single round is bounded by $\mathcal{O}\left(m T_D + N_{max} m^2 T_R + N_{max}^3 m^3 \ell \right)$. 
Combining this bound with the lemmas found in this section proves Theorem~\ref{thm:auction_time_complexity}.

\section{Asymmetric Broker Example}
\label{append:asymmetric_broker}
We will demonstrate that addressing non-existence of equilibria by imposing a small minimum on endowments (i.e.~replacing each zero in endowment vectors with $\epsilon$) is not a suitable alternative to resale in graphical economies.
To do so, consider the following modified $3$-node broker example highlighted in~\S\ref{sec:resale}.
Again, for every agent $i \in [3]$, the consumption demand systems $C_i$ given by linear utilities $\u^i$ and $R_i$ is a credit bound resale demand system.

\tikzstyle{agent}=[draw,circle,inner sep=1pt,minimum size=15pt]
\tikzstyle{txt}=[text width=1.6cm,align=left,anchor=north]
\begin{center}
\begin{tikzpicture}[scale=3,-]
  \foreach \i in {1,2,3}
  {
    \node[agent] (\i) at (\i,0) {\i};
    \coordinate (\i-c) at (\i,-.1);
  }
  \node[txt] at (1-c) {$\e^1 = (1,0)$\\$\u^1 = (0,1)$\\[4pt]$\p^1 = \;\;\;\;?$};
  \node[txt] at (2-c) {$\e^2 = (0,0)$\\$\u^3 = (0,1)$\\[4pt]$\p^2 = \;\;\;\;?$};
  \node[txt] at (3-c) {$\e^3 = (0,1)$\\$\u^3 = (1,0)$\\[4pt]$\p^3 = \;\;\;\;?$};
  \path (1) edge (2) (2) edge (3);
\end{tikzpicture}
\end{center}

We show that no KKO equilibrium exists~(\S\ref{subappend:kko_non_existence}), and that any KKO equilibrium arising from imposing a small minimum on endowments--the $\epsilon$-KKO equilibria--wastes an entire unit of utility~(\S\ref{subappend:epsilon_kko}).
By contrast, we conclude by showing that incorporating resale gives rise to an equilibrium matching economic intuition and settles this tension borne from KKO being too local~(\S\ref{subappend:efficient_resale}).

\subsection{Non-existence of KKO Equilibria}
\label{subappend:kko_non_existence}
As before, two agents with complementary endowments and preferences are connected only to a single broker agent with no endowment; the only difference is that the broker agent has a singular preference for one of the goods.
Given the similarity to the original broker example, an essentially identical argument to the one given in~\S\ref{sec:resale} reveals that no KKO equilibrium exists.
To summarize: 
Without resale the market cannot clear if any agent purchases goods from agent 2 as $\e^2 = (0,0)$.
Therefore $p^1_1 = p^3_2 = 0$, otherwise agents 1 or 3 have non-zero budget and by individual rationality must spend it on goods their neighbors are not endowed with.
However, now that these prices are zero, agent 2 only satisfies individual rationality by buying an infinite amount of the second good.
Thus no KKO equilibrium exists.

\subsection{Wasteful Outcomes at \texorpdfstring{$\epsilon$}{epsilon}-KKO Equilibria}
\label{subappend:epsilon_kko}
Suppose a small minimum is imposed on agents' endowments, and all zeros are replaced with $\epsilon$ in the endowments.
We argue that the resultant $\epsilon$-KKO equilibria must have the following prices up to uniformly re-scaling prices by a multiplicative constant $\gamma>0$.
Dotted lines depict one possible movement of goods, but we show all other outcomes are equivalent.

\begin{center}
\begin{tikzpicture}[scale=3,-]
  \foreach \i in {1,2,3}
  {
    \node[agent] (\i) at (\i,0) {\i};
    \coordinate (\i-c) at (\i,-.3);
  }
  \node[txt] at (1-c) {$\e^1 = (1,\epsilon)$\\$\u^1 = (0,1)$\\[4pt]$\p^1 = (0,\epsilon)$};
  \node[txt] at (2-c) {$\e^2 = (\epsilon,\epsilon)$\\$\u^2 = (0,1)$\\[4pt]$\p^2 = (1,\epsilon)$};
  \node[txt] at (3-c) {$\e^3 = (\epsilon,1)$\\$\u^3 = (1,0)$\\[4pt]$\p^3 = (1,\epsilon)$};
  \path (1) edge (2) (2) edge (3);
  \path[dashed,-latex]
  (2) edge[bend left] node[above] {$(\epsilon,0)$} (3)
  (3) edge[bend left] node[below] {$(0,1)$} (2)
  (1) edge [dashed,loop above,-latex,looseness=6] node[auto] {$(1,\epsilon)$} (1)
  (2) edge [dashed,loop above,-latex,looseness=6] node[auto] {$(0,\epsilon)$} (2)
  (3) edge [dashed,loop above,-latex,looseness=6] node[auto] {$(\epsilon,0)$} (3);
\end{tikzpicture}
\end{center}

Since we are interested in $\epsilon$-KKO equilibrium outcomes, no resale is possible and goods can only be traded with immediate neighbors.
Therefore agents 1 and 2 are the only possible consumers for agent 1's endowment of the first good $\e^1_1$, but neither agents 1 or 2 have utility for this good.
All endowments must be consumed for the market to clear in eq.~\eqref{eq:KKO-clear}, so it follows that $p^1_1=0$ or else by individual rationality agents 1 and 2 will not consume the good.
Furthermore, we know $p^1_2=p^2_2=p^3_2$ otherwise market clearing is impossible since individual rationality dictates that goods are only ever consumed at the lowest price in an agent's neighborhood.
For the same reason we get that $p^2_1=p^3_1$.

Let $\alpha=p^1_2=p^2_2=p^3_2$ and $\beta=p^2_1=p^3_1$, where we know $\alpha,\beta>0$ otherwise individual rationality is only satisfied by agents consuming an infinite amount of the goods they have utility for.
We find that agents 1, 2, and 3 sell their endowments for a revenue of $\p^1 \cdot \e^1 = \epsilon \alpha$, $\p^2 \cdot \e^2 = \epsilon \beta + \epsilon \alpha$, and $\p^3 \cdot \e^3 = \beta + \epsilon \alpha$ respectively.
For simplicity, let us assume that agent 1 spends its entire $\epsilon \alpha$ revenue to purchase back its own endowment vector for consumption.\footnote{We have equivalent outcomes when agents 1 and 2 ``split'' $\e^1_1$ and/or ``swap'' consumption from equal parts of $\e^1_2$ and $\e^2_2$. 
The reason $\e^1_1$ can be ``split'' at equilibrium is because it is free and neither agent has utility for the good.
Similarly, $\e^1_2$ and $\e^2_2$ can have equal parts ``swapped'' between the agents because the good is equally priced by both agents and they have identical utility for the good.}
In this case, agent 2 spends its entire revenue to purchase the entirety of both its own and agent 3's endowment of the second good, $\e^2_2$ and $\e^3_2$ respectively--it is intuitive to think of agent 2 using the revenue $\epsilon \beta$ from selling $\e^2_1$ to purchase $\e^3_2$ and the revenue $\epsilon \alpha$ from selling $\e^2_2$ to purchase itself back.
Clearly then agent 3 spends its entire revenue to purchase the entirety of both its own and agent 2's endowment of the first good, $\e^3_1$ and $\e^2_1$ respectively.
At the end of this process, agents 1, 2, and 3 have consumption plans summarized by the bundles $(1,\epsilon)$, $(0,1+\epsilon)$, and $(2 \epsilon,0)$ respectively.

Local market clearing and individual rationality are clear since we ensured all endowments are consumed by immediate neighbors, agents only spend revenue on utility maximizing goods, and each agent spends exactly their entire revenue.
To derive the prices shown above, let $\beta=1$.
From agent 2's budget constraint we find that $\epsilon \alpha + \alpha = \epsilon + \epsilon \alpha$, where the left hand side is the cost of agent~2's consumption bundle and the right hand side is their revenue.
Hence, when $\beta=p^2_1=p^3_1=1$ we get $\alpha=p^1_2=p^2_2=p^3_2=\epsilon$.
By rescaling $\beta$, it is clear that all other $\epsilon$-KKO equilibrium prices must be a rescaling of these prices by a multiplicative constant.
In addition, we saw that an entire unit of utility is lost at $\epsilon$-KKO equilibria and the final utilities are given by $\epsilon$, $1+\epsilon$, and $2 \epsilon$ for agents 1, 2, and 3 respectively.

\subsection{Efficient Outcomes at Resale Equilibria}
\label{subappend:efficient_resale}
We now conclude by showing that allowing resale leads to equilibria that match economic intuition even when endowment vectors are sparse.
We will argue that, for any $0 < b \leq 2$, the following prices at $\alpha=\tfrac{\sqrt{1+4b}-1}{2}$ lead to a $\b$-resale equilibrium where $\b=(b,b,b)$.
Observe that $\alpha=\tfrac{\sqrt{1+4b}-1}{2}$ is a solution of the quadratic equation for the polynomial $\alpha^2 + \alpha = b$.

\begin{center}
\begin{tikzpicture}[scale=3,-]
  \foreach \i in {1,2,3}
  {
    \node[agent] (\i) at (\i,0) {\i};
    \coordinate (\i-c) at (\i,-.3);
  }
  \node[txt] at (1-c) {$\e^1 = (1,0)$\\$\u^1 = (0,1)$\\[4pt]$\p^1 = (\alpha,\tfrac{1}{\alpha})$\!\!\!};
  \node[txt] at (2-c) {$\e^2 = (0,0)$\\$\u^3 = (0,1)$\\[4pt]$\p^2 = (1,\tfrac{1}{\alpha})$};
  \node[txt] at (3-c) {$\e^3 = (0,1)$\\$\u^3 = (1,0)$\\[4pt]$\p^3 = (1,1)$};
  \path (1) edge (2) (2) edge (3);
  \path[dashed,-latex]
  (1) edge[bend left] node[above] {$(1,0)$} (2)
  (2) edge[bend left] node[above] {$(1,0)$} (3)
  (3) edge[bend left] node[below] {$(0,1)$} (2)
  (2) edge[bend left] node[below] {$(0,\alpha^2)$} (1);
\end{tikzpicture}
\end{center}

To verify the equilibrium, recall that it is helpful to think of two phases.
Agents 1 and 3 have nothing to gain from resale, so abstain from the first phase, whereas agent 2 can profit and is willing to resell any bundle of goods with a total cost of $b$ so long as the goods being resold maximize profit-per-credit.
By definition $0 < \alpha \leq 1$ and $p^2_1/p^1_1 = p^2_2/p^3_2 = 1/\alpha$, therefore the goods with non-zero endowments always have maximal profit-per-credit and agent 2 can resell any combination of these goods with a total cost of $b$.
Furthermore, as all prices are non-zero and individual rationality dictates that agents will not spend any budget on goods they have no utility for, we know that agent 2 must purchase $(1,0)$ from agent 1 for resale in order for the market to clear.
It follows that, since $p^1_1 = \alpha$, $p^3_2 = 1$, and the total cost of goods resold must be $b$, agent 2 purchases $(0,b - \alpha) = (0,\alpha^2)$ from agent 3 for resale.
Hence, agent 2 makes an optimal profit from resale of $1-\alpha^2$ and agent 1 has sold its entire endowment for a total revenue of $\alpha$.
In the second phase, agent 3 sells its remaining endowment, which combines with the revenue made during the resale phase, for a total revenue of $1$.
Agents 1 and 2 are only interested in buying the second good and will spend their entire budget doing so at the best band-per-buck price available; agent 1 thus purchases $(0,\alpha^2)$ from agent 2 at a cost of $\alpha$ and agent 2 purchases $(0,1-\alpha^2)$ from agent 3 at a cost of $1 - \alpha^2$.
Similarly, agent 3 is only interested in buying the first good and will spend their entire budget doing so; thus agent 3 purchases $(1,0)$ from agent 2 at a cost of $1$.

Optimal arbitrage and individual rationality are clear since agents only resold maximal profit-per-credit goods and bought maximal bang-per-buck goods using their entire budgets; it remains to check the clearing constraint.
We know agent 2 has an endowment of $(0,0)$ and resold $(1,\alpha^2)$, therefore since agents 1 and 3 consumed $(0,\alpha^2)$ and $(1,0)$ from agent 2, respectively, the market locally clears at agent 2.
Similarly, agents 1 and 3 had endowments of $(1,0)$ and $(0,1)$ respectively.
Since agent 2 resold $(1,0)$ and $(0,\alpha^2)$ from agents 1 and 3, respectively, and consumed $(0,1-\alpha^2)$ from agent 3, it is clear that the market locally clears at agents 1 and 3.
Therefore we have confirmed that this is a $\b$-resale equilibrium.

Thus adding resale resolves this modified broker example in an intuitive way: adding a small amount of resale capacity $b$ facilitates trade between agents 1 and 3, while allowing agent 2 to extract rent and consume nearly all of the good it has utility for.
In particular, the final utilities are $\alpha^2$, $1 - \alpha^2$, and $1$, respectively.
The larger $b$ becomes, the less rent agent 2 can extract, for the simple reason that the prices for the other agents must increase for the market to clear.
When $b \geq 2$, all prices become equal and the market is effectively a classic AD exchange economy as predicted in~\S\ref{subsec:model}.
Regardless of $b$, we see that the social welfare is strictly better than the $\epsilon$-KKO equilibrium outcomes found in~\S\ref{subappend:epsilon_kko} since all utility is extracted from the endowments.

Finally, though we must have $b>0$ as no KKO equilibrium exists, the limiting equilibrium as $b \to 0$ is well defined and has agent 2 consuming \emph{all} of the good it has utility for.
This demonstrates that while imposing a small minimum on endowments is not a economically viable substitute for resale, the limiting resale equilibrium is a well motivated alternative to imposing a small minimum on endowments.

\end{document}